\newcommand{\maths}[1]{{\mathbb #1}}  
\newcommand{\RR}{\maths{R}}
\newcommand{\CC}{\maths{C}}
\newtheorem{Proposition}{Proposition}[section]  
\newtheorem{Def sec}{Definition}[section]   
\newtheorem{Definition}{Definition}[section]
\newcommand{\revised}[1]{{\color{black} #1}}
\newcommand{\revisedt}[1]{\textcolor{black}{#1}}
\begin{document}
\title{Multi-directional Geodesic Neural Networks via Equivariant Convolution}

\author{Adrien Poulenard}
\affiliation{%
  \institution{LIX, Ecole Polytechnique}
}
\email{adrien.poulenard@inria.fr}

\author{Maks Ovsjanikov}
\affiliation{%
  \institution{LIX, Ecole Polytechnique}
}
\email{maks@lix.polytechnique.fr}


\begin{abstract}
  We propose a novel approach for performing convolution of signals on curved surfaces and
  show its utility in a variety of geometric deep learning applications. Key to our construction is
  the notion of \revised{directional} functions defined on the surface, which extend the classic real-valued
  signals and which can be naturally convolved with with real-valued template functions. As a result, 
  rather than trying to fix a canonical orientation or only keeping the maximal response across all 
  alignments of a 2D template at every point of the surface, as done in previous works, we show how 
  information across all rotations can be kept across different layers of the neural network. Our construction, 
  which we call \emph{multi-directional geodesic convolution}, or \emph{directional convolution} for short, 
  allows, in particular, to propagate and relate directional information across layers and thus different regions on the shape. 
  We first define \revised{directional convolution} in the continuous setting, prove its key properties and then 
  show how it can be implemented in practice, for shapes represented as triangle meshes. We evaluate \revised{directional convolution} in a wide variety of learning  scenarios ranging from classification of signals on surfaces, to shape segmentation and shape  matching, where we show a significant improvement over several baselines.
\end{abstract}

%
%

\begin{CCSXML}
<ccs2012>
<concept>
<concept_id>10010147.10010371.10010396.10010402</concept_id>
<concept_desc>Computing methodologies~Shape analysis</concept_desc>
<concept_significance>500</concept_significance>
</concept>
</ccs2012>
\end{CCSXML}

\ccsdesc[500]{Computing methodologies~Shape analysis}

\keywords{Geometric Deep Learning, Convolution, Rotation Equivariance, Parallel Transport}


\maketitle

\section{Introduction}
The success of convolutional neural networks (CNNs) for image processing tasks \cite{krizhevsky2012imagenet} has brought attention from the geometry processing community. In recent years multiple techniques have been developed to reproduce the success of CNNs in the context of geometry of curved surfaces with applications including shape recognition \cite{su2015multi}, segmentation \cite{kalo17,maron2017convolutional} or shape matching \cite{boscaini2015learning}, among many others. A key aspect of CNNs is that they rely on convolution operations. In the Euclidean domain the notion of convolution is well-defined whereas on non-Euclidean spaces, \revised{in general,} there is no direct analogue that satisfies all the same properties.

Different approaches have been proposed to overcome this limitation. Perhaps the simplest and most common consist in either performing convolution directly on the surrounding Euclidean 3D space (using so-called volumetric approaches \cite{maturana2015voxnet}, \cite{qi2016volumetric}), or constructing multiple projections (views) of an embedded object from different angles and applying standard CNNs in 2D \cite{su2015multi}. This idea has also been extended to using more general mappings onto canonical domains, including the plane  \cite{sinha2016deep,ezuz2017gwcnn}, or e.g. toric domains which admit a global parameterization and where convolution can be defined naturally \cite{maron2017convolutional}. Unfortunately, such mappings can induce significant distortion and might be restricted to only certain topological classes.

\begin{figure}
  \centering
  \subcaptionbox{\revised{GCNN 3D}}{\includegraphics[width=0.3\linewidth]{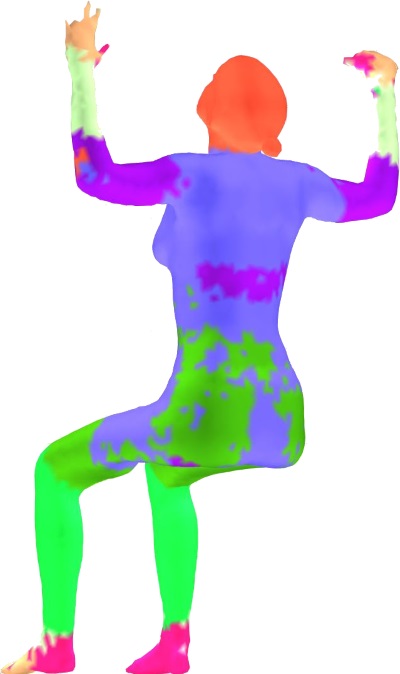}}\hfill%
  \subcaptionbox{\revised{Ours 3D}}{\includegraphics[width=0.3\linewidth]{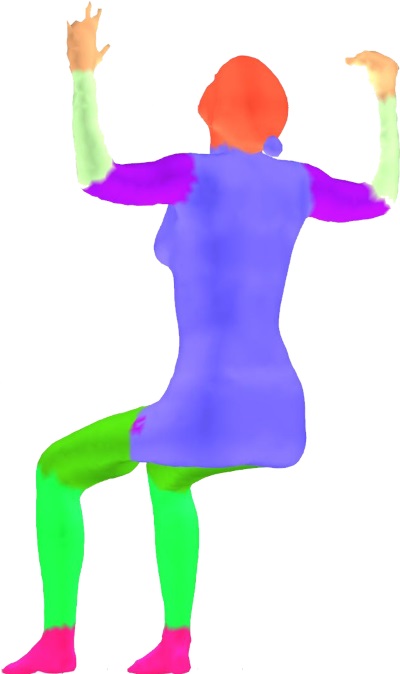}}\hfill%
  \subcaptionbox{Ground truth}{\includegraphics[width=0.3\linewidth]{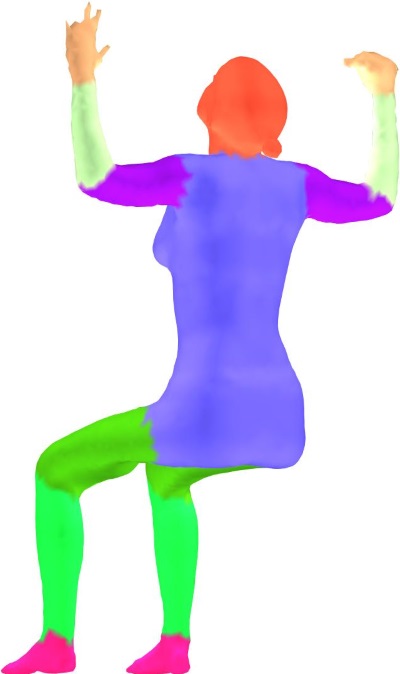}}
  \caption{Our approach (middle) enables a variety of learning applications, including shape segmentation using ``weak'' signals, such as 3D coordinates as input, via a richer and more accurate notion of convolution on surfaces. This leads to practical improvements compared to previous methods such as GCNN \cite{boscaini2015learning} as shown on the left.\label{fig:teaser}}\vspace{-3mm}
\end{figure}

On the other hand, several \emph{intrinsic} approaches have been proposed to define analogues of convolution directly on the manifold \cite{boscaini2015learning,MasBosBroVan15,MasBosBroVan16}. These techniques aim at learning local template (or kernel) functions, which can be mapped onto a neighborhood of each point on a surface and convolved with signals defined on the shape. These methods are general, can be applied regardless of the shape topology and moreover are not sensitive to the changes in shape embedding, making them attractive in non-rigid shape matching applications, for example. Unfortunately, general surfaces lack even \emph{local} canonical coordinate systems, which means that the mapping of a template onto the surface is defined only up to the choice of an orthonormal basis of the tangent plane at every point. To overcome this limitation, previous methods either only consider the maximal response over a certain number of rotations \cite{boscaini2015learning} or aim to resolve these ambiguities using principal curvature directions \cite{MasBosBroVan15,MasBosBroVan16}. Unfortunately such approaches can either lead to more instabilities or, in the case of angular max pooling, lose the relative orientation of the kernels across multiple convolutional layers.

%
%

In this paper, we propose to overcome this key limitation of intrinsic methods by \revised{aligning} the convolutional layers of the network. Our idea is to consider \emph{directional} (or, equivalently, \emph{angular}) functions defined on surfaces, and to define a notion of convolution for them which results, again, in \emph{directional functions} without loss of information. This allows us to impose specific canonical relations across the layers of a neural network, lifting the directional ambiguity to only the last layer. We then need to take the maximal response over all rotations \emph{only on the last layer}. This allows us to better capture the relative response at different points, leading to an overall improvement in training and test accuracy.


\section{Related Work}
Geometric Deep Learning is an emerging field aimed at applying machine learning techniques in the
context of geometric data analysis and processing. Below we review the techniques most closely
related to ours, and especially concentrate on various ways to define and use convolution on
geometric (3D) shapes and refer the interested reader to several recent surveys, e.g.
\cite{xu2016data,bronstein2017geometric}.

\subsection{Extrinsic and Volumetric Techniques}
Perhaps the most common approach for exploiting the
power of Convolutional Neural Networks (CNNs) on 3D shapes is to transform them into 2D images, for
example by rendering multiple views of the object. Some of the earliest variants of this idea
  include methods that represent shapes as unordered collections of views (or range images)
\cite{su2015multi,wei2016dense,kalo17} or exploit the panorama image representation
\cite{shi2015deeppano,sfikas2017exploiting} among others, as well as techniques based on Geometry Images \cite{sinha2016deep}, which represent the 3D geometry by mapping the coordinates onto the plane.

Another common set of methods considers 3D shapes as volumetric objects and defines convolution
simply in the Euclidean 3D space, e.g. \cite{wu20153d,maturana2015voxnet} among others. Due to the
potential memory complexity of this approach, several efficient extensions have been proposed, e.g.,
\cite{wang2017cnn,klokov2017escape}, and a comparison between view-based and volumetric approaches
has been presented in \cite{qi2016volumetric}.

Other recent techniques analyze 3D shapes simply as collections of points and define deep neural
network architectures on point clouds, including, most prominently PointNet \cite{qi2017pointnet},
PointNet++ \cite{qi2017pointnet++} \revised{and several extensions such as PointCNN
  \cite{li2018pointcnn} and Dynamic Graph CNN \cite{wang2018dynamic}}  for shape classification and
segmentation, and PCPNet \cite{guerrero2018pcpnet} for normal and curvature estimation, among others.

Despite their efficiency and accuracy in certain situations, these techniques rely directly on
the embedding of the shapes, and are thus very sensitive to changes in shape pose, which can limit
their use, for example in non-rigid shape matching. 

\subsection{Intrinsic and Graph-based Techniques}
To overcome this limitation, several \emph{intrinsic}
methods have been proposed, for defining and exploiting convolution directly on the surface of the
shape. This includes spectral methods, which exploit the relation between convolution and
multiplication in the spectral domain \cite{boscaini2015learning}, and which have also been applied
on general graphs \cite{defferrard2016convolutional}. A similar technique, treating shapes as
graphs, has been used for analyzing arbitrary shape collections \cite{syncspeccnn}, while
synchronizing their laplacian eigenbases with functional maps
\cite{ovsjanikov2012functional}. Another recent method \cite{ezuz2017gwcnn} consists in optimizing
an embedding of the shape onto the planar domain using intrinsic metric alignment
\cite{solomon2016entropic}. \revised{Finally, a very recent Surface Networks approach
  \cite{2017arXiv170510819K} is based on stacking layers consiting of combinations of features
  and their images by the Laplace or Dirac operator to exploit intrinsic and extrinsic information.}

More closely related to our approach are techniques based on \emph{local} shape parameterization,
which define convolution of a signal with a learned kernel on a region of the shape surface. A
seminal work in this direction was done in \cite{MasBosBroVan15} where the authors defined Geodesic
Convolutional Neural Networks (GCNNs), which locally align a given kernel with the shape surface at
each point, and perform convolution in the tangent plane. Unfortunately, the absence of canonical
coordinate systems on surfaces leads to a one-directional ambiguity in the alignment. To rectify
this, the authors of \cite{MasBosBroVan15} proposed to take the maximal response across all possible
alignments. Several later extensions of this approach have used different local patch
parameterizations, \cite{MasBosBroVan16,monti2017} and also used prinicipal curvature directions to
resolve the directional ambiguity. Unfortunately,  principal curvature directions can be highly
unstable, and not uniquely defined even on basic domains such as the sphere and the torus, which can
lead to over-fitting in the training. Finally, a recent approach has been proposed for defining
convolution via mapping onto a toric domain \cite{maron2017convolutional}, which admits a global
parameterization. 

\subsection{Contribution}
In our work, we show that the directional ambiguity that exists when mapping
a template (kernel) onto the surface, as done in \cite{MasBosBroVan15,MasBosBroVan16,monti2017}
\emph{can be maintained} across the layers of the deep neural network without relying on a
canonical direction choice or only keeping the maximal response across all directions. To achieve this, 
we first extend real-valued signals to more general \emph{directional functions}. We then show that
a directional function can be convolved with a template
to produce another directional function, and can thus be stacked in a deep neural network. As a
result the directional ambiguity is lifted up to the last layer, where it can be resolved by
taking the maximum response only once. We demonstrate through extensive experiments that this leads to
overall improvement in accuracy and robustness in a range of applications. Finally, we extend
previous approaches such as \cite{MasBosBroVan16,monti2017} by adding spatial pooling layers through
mesh simplification and exploiting residual learning (ResNet) blocks \cite{he2016deep} in the architecture.



\section{Convolution over manifolds}
Throughout our work, we assume that we are dealing with 3D shapes, represented as oriented
(manifold) 2D surfaces, embedded in 3D. For simplicity of the discussion, we also assume that the shapes are without boundary, although our practical implementation does not have this
limitation.

Throughout our discussion, the input signal is assumed to be a tuple of real-valued functions defined
on each shape in the collection. These can either represent some geometric descriptors, or even
simply the 3D coordinates of each point. In this section, we describe how the convolution operation
is applied to a given signal for a fixed template. Our approach follows the general structure
proposed in \cite{MasBosBroVan15}, but we highlight the key differences, arising from our use of
\revised{\emph{directional convolution}}. Finally, let us note that for simplicity we concentrate on oriented
two-dimensional manifolds, although most of our discussion can be adapted easily to more general settings.

Recall that in the standard two-dimensional Euclidean setting, the cross-correlation or convolution of a real-valued
function $f$ by a smooth compactly supported (template) function $k: \RR^2 \rightarrow \RR$ is defined for all
$x \in \RR^2$ by:
\[
\revised{(f * k) ( x )}
:=
\int_{\RR^2} f(t) k(t+x) dt.
\]
Convolution is a way to compare the function $f$ to a template function $k$ at every point. We can
reinterpret convolution in the following way: first, for every point $x$ we identify the tangent
space to $x$ with with a copy of \revised{$\RR^2$} whose origin has been shifted to $x$. Then, translation
by $x$ can be seen as linear isometry:
\[
\tau_{x}: T_x \RR^2 \rightarrow \RR^2 \simeq T_0 \RR^2, ~~ \tau_x(p) = p + x ~\forall~ p,
\]
where $\RR^2 \simeq T_0 \RR^2$ simply means that the tangent plane at the origin can be identified
with the whole of $\RR^2$. Moreover, the identity map ``centered at $x$'' can be seen as another isometry:
\[
\mathrm{Id}_x: T_x \RR^2 \rightarrow \RR^2, ~~ \mathrm{Id}_x(p) = p ~\forall~ p.
\]
Therefore can rewrite the convolution of $f$ by $k$ at $x$, as:
\[
\revised{(f * k) ( x )}
=
\langle \mathrm{Id}_x^*f,  \tau_{x}^*k \rangle_{L^2},
\]
where the supperscript $^{*}$ means the pull-back of the function with respect to the map, and $\langle, \rangle_{L^2}$
is the standard $L^{2}$ inner product. We can
now generalize this construction to two-dimensional manifolds by generalizing the maps $\mathrm{Id}_x^*$ and
$\tau_{x}^*$. At every point $x$ of a Riemannian manifold $X$ the exponential map:
\[
\exp^X_x : T_x X \rightarrow X
\]
generalizes the previous map $\mathrm{Id}_x: T_x \RR^2 \rightarrow \RR^2$ in the sense that
$ \exp^{\RR^2}_x = \mathrm{Id}_x.$ Following \cite{MasBosBroVan15}, we assume that the template
function $k$ is defined over a copy of $\RR^2$, denoted by $T_0\RR^2$, and generalize the map
$\tau_x$ by isometrically aligning this copy with the tangent plane $T_x X$ at $x$. \revised{That is,}
the map $\tau_x : T_x X\rightarrow T_0 \RR^2$ \revised{must be a linear isometry}. This map is uniquely-defined 
given a choice of the correspondence of the coordinate axes of $\RR^2$ with a coordinate frame \revised{of} $T_x X$. 
For oriented surfaces, this reduces to the alignment of one coordinate axis. 
If \revised{$e_1 := (1,0)$} is the first coordinate axis on $\RR^2$, this means that the inverse of the map
$\tau$ must send \revised{$e_1$} to all tangent spaces, and that $k$ can be pulled-back onto $T_x X$ for any $x$,
via $\tau_x^{*}k$.

Since unlike the Euclidean case, there is no global choice of \revised{reference direction} on a surface, an arbitrary
choice of the pre-image of \revised{$e_1$} on every tangent space $T_x X$ can lead to biased results, which
furthermore will not generalize across different (e.g., training and test) shapes. 

To resolve this ambiguity, the authors of \cite{MasBosBroVan15} consider the family of maps
$\tau_{x,v}$ parameterized by a choice of a unit vector $v$ in the tangent plane of $x$, which is
mapped to the first coordinate axis in $\RR^2$, i.e. $\tau_{x,v}(v) = \revised{e_1}$. They then define \revised{geodesic
convolution} by taking the maximum response of the signal to the template function $k$ mapped via
$\tau_{x,v}$ across all choices of $v$.
\begin{align}
\revised{(f \circledast k) ( x )}
= \max_v ~ \langle (\mathrm{exp}^X_x)^{*}f,  \tau_{x,v}^{*}k \rangle_{L^2}.
\end{align}
The two main advantages of this procedure are that: 1) it does not depend on the choice of
reference direction in the tangent planes, and 2) the output of $f \circledast k$ is again a real-valued
function, so that convolutions can be applied successively within a deep network.

Unfortunately, only keeping the maximum response also results in a loss of directional information,
which can make it more difficult to detect certain types of features in the signal. In particular,
since the maximum is applied independently at every point, directional information of the response
of the signal to the template is not shared across nearby points.

\begin{figure}[t!]
    \centering
    \includegraphics[width=0.9\linewidth]{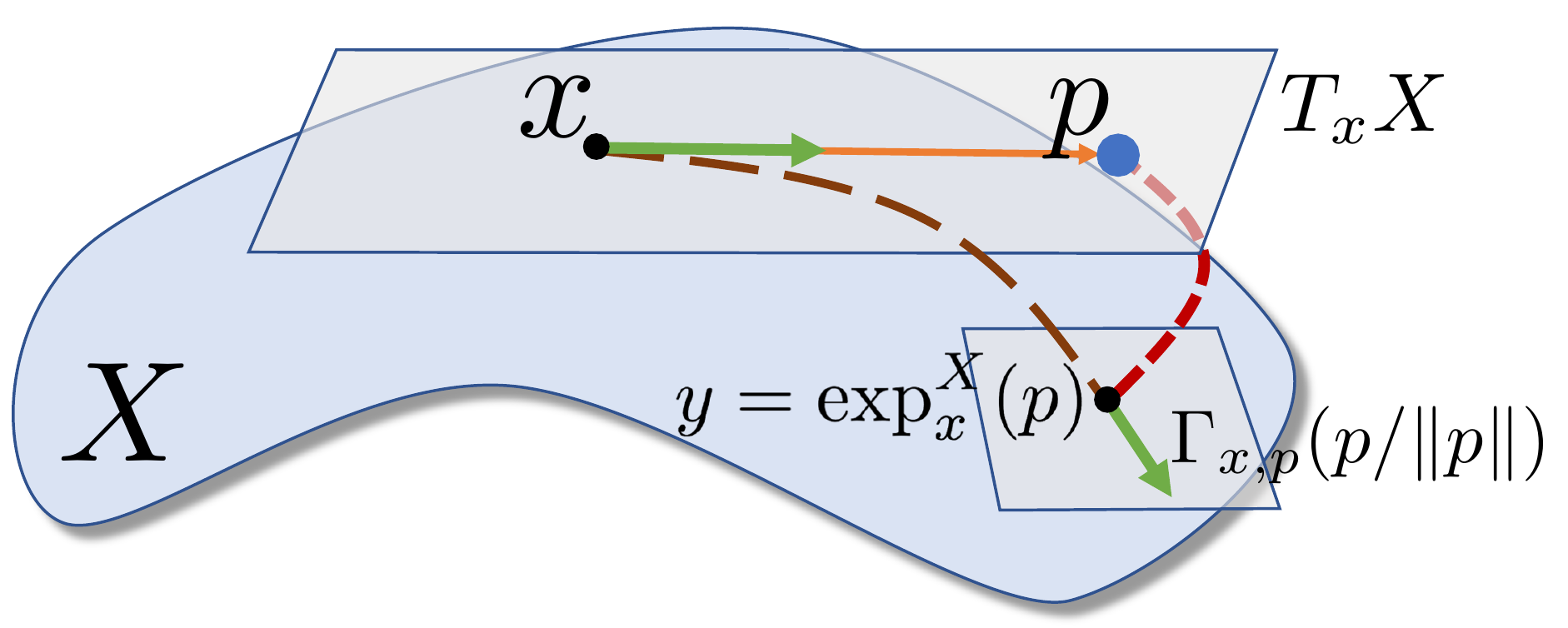}
    \vspace{-2mm}
    \caption{\revised{The \emph{completed} exponential map at a point $x \in X$ sends any non null
        tangent vector $p \in T_x X$ to the couple
        $(y = \exp_x^X(p), v = \Gamma_{x, p}(\frac{p}{||p||}))$ where $v \in T_y X$ is the result of
        parallel transport of $\frac{p}{||p||}$ along the geodesic from $x$ to $y$, with the initial
        direction given by $p$.
        \label{fig:comp_exp}}}
\end{figure}

\revised{\subsection{Convolution of Directional Functions}}
\label{subsec:dirconv}
We propose to address these limitations by considering convolution of a template function with a
more general notion of \revised{\emph{directional functions}} defined on arbitrary \revised{surfaces}.

We call a directional function, any function $\varphi(x,v)$ that depends on both the point
$x$ on a surface $X$, and on the unit direction $v$ in the tangent plane $T_x X$ at $x$. \revised{
  Clearly any real-valued function $f : X \rightarrow \RR$ can be lifted to a directional
  function $\tilde{f}$, simply by ignoring the directional argument and setting $\tilde{f}(x,v) =
  f(x)$ for any $v$. }
Our key observation is that the convolution of a \emph{directional function} $\varphi$ with respect to a
template $k : T_0 \RR^2 \rightarrow \RR$ can be defined naturally, so that the result of the
convolution is, once again, a directional function. For this we first complete the exponential map,
so that for any point $p \in T_xX$ in the tangent plane of $x$, it produces both a point $y$ on $X$
and a unit direction in the tangent plane of $y$. To achieve this we define the completed
exponential map:

\begin{align}
\overline{\exp}^X_{x}(p) = (\exp_x^X(p), ~\Gamma_{x, p}(p/\|p\|)),
\label{eq:complete_exp}
\end{align}
where $\Gamma_{x, p}(p/\|p\|)$ is the parallel transport of the unit vector $p/\|p\|$ from
the tangent plane of $x$ to the tangent plane of $y=\exp_x^X(p)$ along the geodesic between $x$ and
$y$ with initial velocity $p/\|p\|$ \revised{(see Figure \ref{fig:comp_exp})}. Thus, for any point $p$ in the tangent plane of $x$, $\overline{\exp}^X_{x}(p)$ outputs a point
$y$ on the manifold and a vector in the tangent plane of $y$. Moreover, since parallel transport
along geodesics preserves the norms of vectors, $\Gamma_{x, p}(p/\|p\|)$ must also be
a unit vector. This map is well-defined everywhere, except at the origin $p=0$. This does not pose a
problem in our setting, however, since we only use this map inside an integral.
We now define \revised{\emph{multi-directional geodesic convolution}, or \emph{directional convolution} for short,} of a template
$k : T_0 \RR^2 \rightarrow \RR$ with a directional function $\varphi$ \revised{by}:
\begin{align}
\revised{(\varphi \star k) (x,v)} = \langle (\overline{\exp}^X_{x})^{*}\varphi, \tau_{x,v}^*k \rangle_{L^2}.
\label{eq:dirconv}
\end{align}
Note that, $\psi_x = (\overline{\exp}^X_{x})^{*}\varphi$ is a real-valued function in the tangent
plane of $x$, where $\psi_x(p) = \varphi(\overline{\exp}^X_{x}(p))$ \revised{(see Figure
  \ref{fig:dir_conv_fig})}.  Moreover, note that the result of a convolution of a directional
function $\varphi$ and a template $k$ is once again a directional function, as it depends both on
the point $x$ and on the direction $v$. \revised{We also remark that directional convolution can be
  extended to regular-valued functions. Thus, for any $f:X \rightarrow \RR$ we simply set:
  $f \star k := \tilde{f} \star k$ where $\tilde{f}$ is the lifting of $f$ to a directional function,
  as described above.}

\vspace{2mm}
\subsection{\revisedt{Directional vs. Geodesic Convolution}}
As suggested in the previous section, \revised{geodesic convolution} introduced in
\cite{MasBosBroVan15} and our \revised{directional convolution} are closely related. Indeed, below
we show that \revised{directional convolution} is strictly more informative than \revised{geodesic
  convolution}. The following proposition (proved in the Appendix) shows that \revised{geodesic
  convolution} can be factorized by taking the maximal directional response of \revised{directional
  convolution} thus losing the directional information.
\begin{Proposition}
  Let $f$ a function on $X$ and $k$ a template. Denote by $\tilde{f} = (x, v) \mapsto f(x)$ the
  directional function obtained via $\tilde{f}(x,v) = f(x)$ for all unitary $v\in T_x X$. Then:
\begin{align*}
f \circledast k ( x ) = \max_{v \in T_x X} \revised{(\tilde{f} \star k)(x, v)}
\end{align*}
\label{prop:dg_to_gc}
\end{Proposition}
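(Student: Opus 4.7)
The plan is to unfold both sides of the claimed identity according to the definitions given in the paper, and observe that the only difference between \revised{geodesic convolution} and \revised{directional convolution} applied to a lifted function is a directional component that is discarded by the lift $\tilde{f}$.

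First I would expand the right-hand side using the definition of directional convolution in equation \eqref{eq:dirconv}:
\begin{align*}
(\tilde{f} \star k)(x,v) = \langle (\overline{\exp}^X_x)^{*}\tilde{f}, \tau_{x,v}^{*} k \rangle_{L^2}.
\end{align*}
Then I would compute the pull-back $(\overline{\exp}^X_x)^{*}\tilde{f}$ pointwise on $T_x X$. By the definition of the completed exponential map \eqref{eq:complete_exp}, for any $p \in T_x X$ with $p \neq 0$,
\begin{align*}
((\overline{\exp}^X_x)^{*}\tilde{f})(p) = \tilde{f}\bigl(\exp_x^X(p), \, \Gamma_{x,p}(p/\|p\|)\bigr).
\end{align*}
The key observation is that $\tilde{f}$ is the trivial lift of $f$, so it ignores its directional argument: $\tilde{f}(y,w) = f(y)$ for every unit tangent vector $w$. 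Consequently, the parallel-transported direction $\Gamma_{x,p}(p/\|p\|)$ is irrelevant, and we obtain
\begin{align*}
((\overline{\exp}^X_x)^{*}\tilde{f})(p) = f(\exp_x^X(p)) = ((\exp_x^X)^{*}f)(p).
\end{align*}
Since the map $\overline{\exp}^X_x$ is well-defined everywhere except at the origin, which is a set of measure zero, this pointwise equality holds almost everywhere in $T_x X$, hence the two functions agree as elements of $L^2$.

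Substituting this equality back into the $L^2$ inner product gives
\begin{align*}
(\tilde{f} \star k)(x,v) = \langle (\exp_x^X)^{*}f, \tau_{x,v}^{*}k \rangle_{L^2},
\end{align*}
which is exactly the integrand appearing in the definition of $f \circledast k(x)$ before the maximum is taken. Taking the maximum over unit $v \in T_x X$ on both sides yields $\max_v (\tilde{f} \star k)(x,v) = f \circledast k(x)$, as desired.

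I do not anticipate any real obstacle here: the proof is essentially a definitional unfolding combined with the trivial fact that the lift $\tilde{f}$ forgets directional information. The only subtlety to mention is that $\overline{\exp}^X_x$ is not defined at $p = 0$, but since this is handled inside an integral (as already noted after equation \eqref{eq:complete_exp}), it does not affect the $L^2$ inner product.
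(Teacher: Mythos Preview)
Your proof is correct and follows essentially the same approach as the paper's own argument: both compute $(\overline{\exp}^X_x)^*\tilde{f}$ pointwise, use that $\tilde{f}$ ignores its directional argument to reduce it to $(\exp^X_x)^*f$, substitute into the $L^2$ inner product, and take the maximum over $v$. Your version is in fact slightly more careful, explicitly noting the measure-zero issue at $p=0$.
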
 

\vspace{-3mm}
\revised{Intuitively, applying directional convolution allows to keep track of the direction \revised{that}
  the signal \revised{comes from}. To illustrate this, we consider the directional convolution of the
  indicator function $\mathbf{1}_{x}$ of a point $x$, by a shifted Dirac kernel $\delta_{(t, 0)}$
  for some $t > 0$. It is easy to see that the result of $\mathbf{1}_{(x, v)} \star \delta_{(t, 0)}$
  is the indicator function of the set of couples $(y, v)$ such that $y \in X$ is at (geodesic)
  distance $t$ from $x$ and $v \in T_y X$ points in the direction of $x$ (see Figure \ref{fig:dirac}
  for an illustration). Moreover, as also illustrated in Figure \ref{fig:dirac}, applying
  directional convolution by $\delta_{(t, 0)}$ multiple times will propagate the signal along
  geodesics from the source point $x$ while maintaining the directional information \revised{attached} to it.

  In contrast, after angular max-pooling directional information to the source point is lost.  This
  means that when applying the basic geodesic convolution repeatedly (e.g. when stacking multiple
  convolution layers in a neural network) the signal does not necessarily travel along geodesic
  paths.  Thus, for example,
  $(\mathbf{1}_{x} \circledast \delta_{(t, 0)}) \circledast \delta_{(t, 0)}$ is a geodesic
  \emph{ball} instead of a circle of radius $2t$ around $x$ 
(see Figure \ref{fig:dirac2}). This might
  result in a loss of information and makes stacks of filters based on geodesic convolution less
  efficient in estimating the distance between features or their relative position, compared to
  directional convolution shown in Figure \ref{fig:3c}.
}
\begin{figure}[t!]
  \centering
  \subcaptionbox{\revised{$n$ dir-conv, \\ $\mathbf{1}_x \star \delta_t \star \dots \star \delta_t$} }{\includegraphics[width=1.2in, scale=1.0]{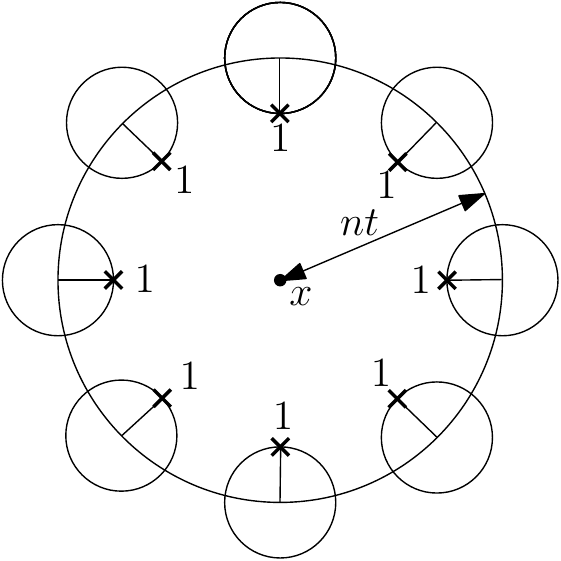}}\hfill%
  \subcaptionbox{\revised{ $n$ dir-conv, \\ 
    $(\mathbf{1}_x + \mathbf{1}_y) \star \delta_t \star \dots \star \delta_t$} \label{fig:3c}}{\includegraphics[width=1.8in,
      scale=1.0]{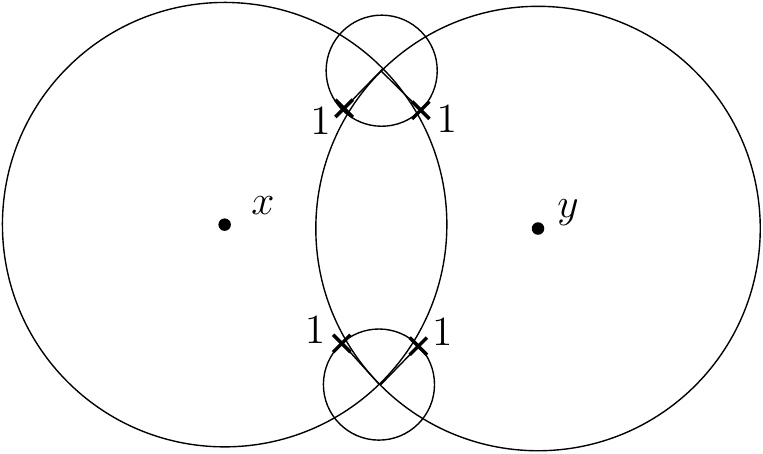}}
    \vspace{-1mm}
    \caption{\revised{(left) The result of applying $n$ times the directed convolution \revised{by the shifted Dirac kernel $\delta_{(t,
          0)}$ to the} indicator function $\mathbf{1}_x$ of point $x$. We obtain a response along the circle of radius $nt$ in the directions pointing to
        $x$. This allows to detect the presence of a feature at a given distance and to point in the
        direction of that feature. (right) The response of the convolution with two isolated signals at $x$ and $y$ is located along two circles, if they intersect the intersection points store information about the relative angle and distance between the features at $x$ and $y$.} \label{fig:dirac}}
\end{figure}
\begin{figure}[t!]
  \centering
  \subcaptionbox{$1$ conv + amp, \\ $\mathbf{1} \circledast \delta = \\ \max_v \mathbf{1} \star \delta(\bullet, v)$ \label{fig:dirac2a}}{\includegraphics[width=1.0in, scale=0.15]{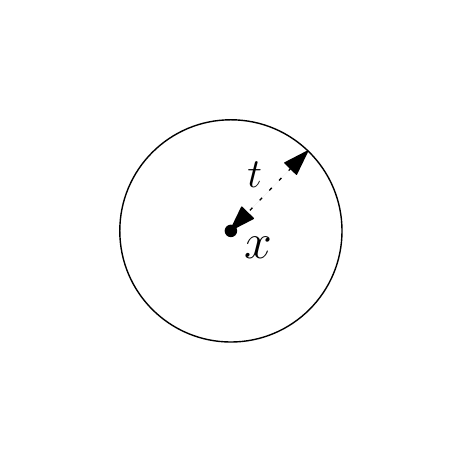}}\hfill%
  \subcaptionbox{$2$ \revised{geod}-conv, \\ $(\mathbf{1} \circledast \delta)\circledast \delta$ \label{fig:dirac2b}}{\includegraphics[width=1.0in, scale=0.15]{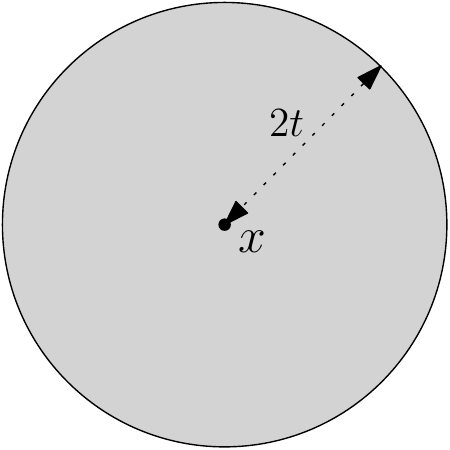}}\hfill%
    \subcaptionbox{$2$ \revised{dir}-conv + amp,\\ $\max_v (\mathbf{1} \star \delta) \star \delta (\bullet, v)$ \label{fig:dirac2c}}{\includegraphics[width=1.0in,
      scale=0.15]{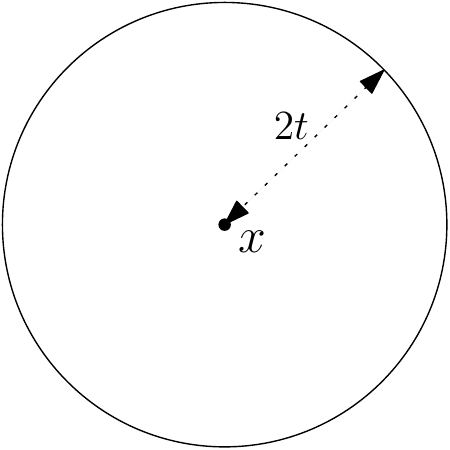}}
    \vspace{-1mm}
    \caption{\revised{Comparison between repeated geodesic (\revised{geod}, middle) and directional
        convolution (\revised{dir}, right) of the indicator of a point $\mathbf{1}$ by a shifted
        Dirac kernel $\delta = \delta_{t}$. (b) With directional convolution we obtain a response
        along the circle of radius $2t$ indicating the presence of a feature at \emph{exactly}
        distance $2t$ (c) With geodesic convolution, we obtain a response along the full disc of
        radius $2t$ only indicating the presence of a feature at distance $\leqslant 2t$.} 
 \vspace{-3mm} \label{fig:dirac2}}
\end{figure}
\revised{
  We also note briefly that directional geodesic convolution does not admit an identity kernel
  but identity can be obtained as a limit of convolutions by shifted Dirac kernels since:
\begin{align*}
\lim_{t \rightarrow 0} \overline{\exp}^X_x(tv) = (x,v).
\end{align*}
}

\begin{figure*}[t!]
  \includegraphics[width=\textwidth]{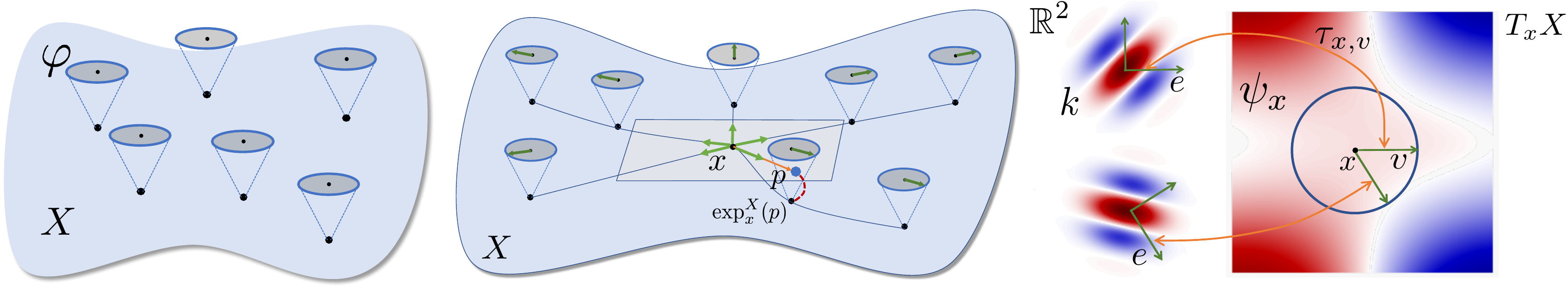}

\caption{\revised{Overview of our directional convolution: \textbf{(left)} we start with a surface
    $X$ and a directional function $\varphi$, which assigns a real value to every unit
    direction in the tangent plane of every point of $X$, shown here as a cone above some
    points. \textbf{(middle)} given a point $x \in X$, we pull back $\varphi$ onto the tangent plane
    of $x$ by computing for every point $p \in T_x X$, its image $y \in X$ via the
    exponential map, and the transport of the unit vector $p/\|p\|$ to the tangent plane of $y =
    \exp^X_x(p)$. We assign to $p$ the value of $\varphi$ at $y$ and the transported vector. By
    repeating this for every $p$ we obtain a real-valued function $\psi_x$ over the tangent plane of
    $x$. \textbf{(right)} Given a unit vector $v$ tangent at $x$ there is a unique orthogonal
    orientation-preserving linear map $\tau_{x,v}$ from $T_x X$ to $\RR^2$ sending $v$ to $e := (1,
    0)$. We pull back a kernel $k$ defined over $\RR^2$ to $T_x X$ and take its $L^2$ product with
    $\psi_x$ giving us a real-number for each unit tangent vector at $x$. Repeating this procedure
    for every $x\in X$ we obtain a new directional function which is the directional convolution of $\varphi$ by the kernel $k$.
    \label{fig:dir_conv_fig}}}
\end{figure*}

\revised{\subsection{Directional Convolution in Angular Coordinates}}

The expression in Eq. (\ref{eq:dirconv}) is coordinate-free, however to implement it we must choose a particular
coordinate system to represent directional functions. In practice, we represent tangent vectors in polar coordinates in
their tangent plane. This representation implies an arbitrary choice of reference direction $e_x \in T_x X$ in each
tangent plane. For any angle $\theta \in [0, 2\pi)$ we denote by $e_x(\theta) = R_{\theta} e_x$, the vector obtained
by rotating $e_x$ by angle $\theta$ in the tangent plane at $x$. Instead of operating with directional functions,
we then work with \emph{angular functions} as follows: given a directional function $\varphi$, and a choice of reference
direction $e_x$ for each point $x$, we define an angular function $\varphi_e$ by:
\[
\varphi_e ( x, \theta ) := \varphi(x, e_x(\theta )).
\]
\revised{that is, $\varphi_e$ is just the function $\varphi$ expressed in polar coordinates in each tangent plane with respect to the reference direction $e$.}
Note also that given \revised{reference directions $e_x \in T_x X$ for all $x \in X$}, both the exponential map and the
parallel transport can be expressed for points and vectors described in polar coordinates in each
tangent plane \revised{(see Appendix)}. Finally, given the family of reference directions \revised{$e = (e_x)_{x \in X}$} and the angular function
$\varphi_e$ we denote by $\varphi_e \star k$ its \revised{directional convolution} with kernel $k$ w.r.t. $e$. To define this
operation, we first convert $\varphi_e$ to its directional counterpart, using the reference directions $e$
and then simply apply the definition given in Eq. (\ref{eq:dirconv}) above \revised{and convert it back to an angular function using the same reference directions:
\[
\varphi_e \star k := (\varphi \star k)_e
\]
The following proposition (proved in the Appendix) shows that the result of \revised{directional convolution} of angular functions is
\emph{equivariant} with respect to the change of the reference directions. Namely:}
\begin{Proposition}
\label{prop:equivar}
Let $e_x \in T_x X$ be a family of unit tangent vectors defined at every point $x \in X$ then for any family of rotations $R_x$ of angle $\theta_x$ at point $x$ we have:
\[
\revised{(\varphi_{R e} \star k)( x, \theta ) 
= 
(\varphi \star k)_{R e}(x, \theta)
=
(\varphi_{e} \star k)(x, \theta + \theta_x )}
\]
Where $(R e)_x := R_xe_x$ at each point $x$.
\end{Proposition}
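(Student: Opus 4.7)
The plan is to unwind the definitions that relate directional functions, their angular representatives, and the choice of reference frame, and then to exploit the fact that rotations in a two--dimensional tangent plane commute. The key observation to carry throughout is that, by definition, $e_x(\theta) = R_\theta e_x$, and so for any rotation $R_x$ of angle $\theta_x$ acting on $T_x X$ we have
\[
(Re)_x(\theta) \;=\; R_\theta (R_x e_x) \;=\; R_{\theta + \theta_x} e_x \;=\; e_x(\theta + \theta_x),
\]
where the middle equality uses the commutativity of planar rotations. This identity says that changing the reference direction by $R_x$ corresponds to a uniform angular shift by $\theta_x$ at $x$.

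For the first equality, I would appeal directly to the definition of directional convolution on angular functions: namely, $\varphi_e \star k$ is defined by first converting $\varphi_e$ back to a directional function via the reference $e$, applying $\star$ in the coordinate--free sense from Eq.~(\ref{eq:dirconv}), and then re--expressing the result as an angular function via $e$. The point to verify is that the directional function associated with the angular function $\varphi_{Re}$ through the reference $Re$ is the very same $\varphi$ that $\varphi_e$ was obtained from. Indeed, letting $\tilde{\varphi}$ denote this directional function, for every $\theta$ one has
\[
\tilde{\varphi}\bigl(x, (Re)_x(\theta)\bigr) \;=\; \varphi_{Re}(x,\theta) \;=\; \varphi\bigl(x, (Re)_x(\theta)\bigr),
\]
so $\tilde{\varphi} = \varphi$ on the unit circle of $T_x X$. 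Applying the definition once more then yields $(\varphi_{Re} \star k)(x,\theta) = (\varphi \star k)_{Re}(x,\theta)$.

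For the second equality, I would simply expand using the angular--from--directional conversion rule $(\varphi \star k)_{Re}(x,\theta) = (\varphi \star k)\bigl(x, (Re)_x(\theta)\bigr)$, substitute the identity $(Re)_x(\theta) = e_x(\theta + \theta_x)$ established above, and then repackage the right-hand side as $(\varphi_e \star k)(x, \theta + \theta_x)$ using the same conversion rule with reference $e$. This is purely a chase through definitions once the rotation--commutation identity is in hand.

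The main (and only) obstacle is conceptual rather than technical: one must keep precise track of which map is being used to pass between directional and angular functions at each step, since the statement involves two distinct reference families $e$ and $Re$. Once the notation is set cleanly, the argument reduces to the one--line fact that composition of planar rotations adds angles and does not depend on order, so that a change of reference by $R_x$ induces exactly an angular translation by $\theta_x$ on the resulting angular function.
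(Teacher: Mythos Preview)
Your proposal is correct and follows essentially the same route as the paper: the first equality is obtained directly from the definition $\varphi_e \star k := (\varphi \star k)_e$ (you add a brief check that converting $\varphi_{Re}$ back through $Re$ recovers $\varphi$, which is a reasonable extra care), and the second equality is exactly the paper's chain $(\varphi \star k)_{Re}(x,\theta) = (\varphi \star k)(x, Re_x(\theta)) = (\varphi \star k)(x, e_x(\theta+\theta_x)) = (\varphi_e \star k)(x,\theta+\theta_x)$ via the identity $(Re)_x(\theta) = e_x(\theta+\theta_x)$.
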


This proposition guarantees that even if we pick an arbitrary reference direction in the tangent
plane of every point $x$, the result of possibly multiple \revised{directional convolution} steps is the
same up to an angle offset. Moreover, this offset is fixed and is the angle difference between the
two reference directions. 

\revised{When directional convolution operator $\star$ is defined directly in angular coordinates,
  and angles are discretized in angular bins, as they are in our practical implementation, this
  implies that changing the reference direction at a given point by applying a circular permutation
  to the angular bins' indices will lead to the same permutation of the result of \revised{directional
    convolution}.} Thus, no further ambiguity is introduced between layers of \revised{directional convolution}, and the
  initial rotational ambiguity is lifted to the last layer. We resolve it by applying angular max
  pooling. This is crucial to learning since it allows learn the same features independently of the
  \revised{reference directions used to define the angular coordinates}.

\revised{
\subsection{Multi-Directional Convolutional Neural Networks}
Our main motivation for introducing directional convolution is to use it as a layer inside a deep
neural network for shape analysis tasks. We define a directional convolution layer simply by
replacing the standard convolution (resp. geodesic convolution) operator and regular functions by
directional functions in the convolutional layer definition. 
Crucially, since directional convolution by a template kernel results in another
directional function, we can stack \emph{directional convolutional layers} in a neural network. 

Since the input of the network is typically given as a regular signal $f: X \rightarrow \RR^d$ we first
convert it into a directional function, by setting $\tilde{f}(x, v) := f(x)$ for all
$x \in X$ and any unit vector $v \in T_x X$.  A convolutional neural network has stacks of learnable
template kernels $k^i_{pq}: \RR^2 \rightarrow \RR$, a collection of learnable bias vectors $b^i$, and
activation functions $\xi_i$. In the simplest setting, we can stack layers sequentially so that the
output of $(i+1)^{\text{st}}$ layer of the network is defined by:
\[
\left\{
    \begin{array}{ll}
        y_0(f) = \tilde{f} \\
        (y_{i+1})_p := \xi_{i+1} (\sum_q ((y_i)_q \star k^i_{pq}) + b^i_p)
    \end{array}
\right.
\]
Our goal during training is then to learn the parameters $k_{pq}^i$ and $b^i$ so that the output function
minimizes some error for a set of examples. In practice, we consider more complex
architectures, as described in Section \ref{sec:eval}. In all cases, angular max pooling is applied
to the last layer $y_n$ to obtain a regular signal over $X$. We call networks based on
directional convolution MDGCNN for Multi Directional Geodesic Convolutional Neural Networks. }

\revised{\section{\emph{\emph{directional convolution}} over Discrete Surfaces}}

We assume that all shapes are represented as connected manifold triangle meshes, possibly with
boundary. In this section we describe our general approach for implementing \revised{directional convolution}
in practice. To simplify the presentation, we describe only the main steps necessary for the
implementation, and defer the exact implementation details to Section \revised{\ref{app:impl_details}} in the Appendix.

To implement \revised{directional convolution}, we need to decide on the discrete representation of template
functions, angular functions on the surface, the exponential map and parallel transport. These are
described as follows:

\vspace{2mm}
\subsection{Template Functions}
To discretize template functions of the form $k: \RR^2 \rightarrow \RR$,
we represent them via windows of discrete polar coordinates in the plane: $(\rho_i, \theta_j)$, where
$\rho_i := \frac{(i+1)R}{N_{\rho} + 1}$ is a set of radii, for $i$ varying between $0$ and $N_{\rho} - 1$ bounded by the window radius $R$ and $\theta_j := \frac{2j\pi}{N_{\theta}}$ is a uniform discrete sampling of angles $[0, 2\pi)$. This means that each $k$ associates a real value to each pair $(\rho_i, \theta_j)$, and can therefore be stored as a matrix of size $N_\rho, N_\theta$. Note that during training these are the unknowns that need to be trained.

\subsection{Angular Functions}
Angular functions are real-valued functions that depend on the point on a
given surface and on a direction in its tangent plane. For a mesh with $N_v$ vertices, we represent
these as matrices of size $(N_v, N_{\theta})$, where $N_\theta$ is defines a discrete angular sampling, in the same way as for template functions above, but with respect to some arbitrary reference direction in the tangent plane of each point. 

\subsection{Exponential Map}
To discretize the exponential map, similarly to previous approaches, we use
Geodesic Polar Coordinates (GPC), which associate a window to every point, and represent other
points in its neighborhood, through the geodesic distance $\rho$ and angle $\theta$ to the base point
in its window. We discretize GPC using the same set of discrete polar pairs $(\rho_i, \theta_j)$ as
for the template functions. This means, in particular, that given a vertex $i$, the points in its
GPC might not fall on the vertices of the mesh. We therefore use barycentric coordinates inside
triangles to interpolate values at points in the GPC windows, based on the values at the vertices of
the mesh. This procedure is crucial since it helps us gain resilience against changes in mesh
structure.  We can model the GPC using a tensor $E$ of size $(N_v, N_\rho, N_\theta, N_v)$,
such that $E_{vijw}$ stores, the barycentric coordinates \revised{with respect to vertex $w$} of the point \revised{having polar coordinates $\rho_i$, $\theta_j$} in the GPC of the window of vertex $v$. Note that
by definition, generically, $E_{vijw}$ will have exactly three non-zero values for fixed $v,i,j$.

\subsection{Parallel Transport}
Finally, we need to discretize the parallel transport operation in order to define the
discrete analogue of the (completed) exponential map $\overline{\exp}$ defined in Eq. (\ref{eq:complete_exp}). Note that
a key feature of our definition is that parallel transport needs to be defined only for unit vectors connecting a point
$p$ in the tangent plane of $v$ to the origin, which correspond, in our discretization, to the angular polar coordinate
$\theta$ of $p$. This means that we need to compute the angle difference between $\theta$ in the window of $v$ and the
corresponding angle in the GPC of the exponential map of $p$. However due to angular discretization the transported
angle does not necessarily fall into the angles $(\theta_j)_j$ of the window at $p$. We therefore interpolate
between consecutive discrete angles to compute angular functions.  Similarly to the exponential map above, the
parallel transport can be discretized as a $5D$ tensor $\Gamma$ of size $(N_v, N_{\rho}, N_{\theta}, N_v, N_{\theta})$
where $\Gamma_{vijwk}$ stores the interpolating coordinate \revised{with respect to the
angle $\theta_k$ at vertex $w$} of the geodesic parallel transport of angle $\theta_j$ to the point with polar coordinates $(\rho_i,\theta_j)$ in the GPC of the window of vertex $v$. The  (completed) angular exponential map $\overline{\exp}$ can then be
interpolated in both the spatial and angular domains as a 5D tensor $\overline{E}$ of size
$(N_v, N_{\rho}, N_{\theta}, N_v, N_{\theta})$ such that:
\begin{align*}
\overline{E}_{vijwk} := E_{vijw}\Gamma_{vijwk}.
\end{align*}
With these constructions at hand, discretizing the generalized convolution, defined in Eq. (\ref{eq:dirconv}) can be done simply by matrix multiplication. Note that aligning the template
function with the tangent plane at $x$ via a map $\tau_{x}$ simply corresponds to aligning the angular coordinate of the template with the angular coordinate of the GPC at $x$. While
aligning a rotated template function via $\tau_{x,v}$ corresponds to applying a circular
permutation to the angular indexing of $k$. We define \revised{geodesic convolution} of a template $K$ and a function $f$ by:
\begin{align*}
\revised{(f \circledast k)_{v}} := \max_{k} \sum_{ij} f_w E_{vijk} K_{i,(j+k) \ \mathrm{mod} \ N_{\theta}}
\end{align*}
Likewise \revised{directional convolution} of an angular function $\varphi$ by $K$ is defined \revised{by}:
\begin{align*}
\revised{(\varphi \star K)_{vl}} = \sum_{ij} \varphi_{wk}\overline{E}_{vijwk}.K_{i, (j+l) \ \mathrm{mod} \ N_{\theta}}.
\end{align*}
We also note that the above definitions extend to the case of multidimensional filters and matrix-valued templates. Remark that only the triangles supporting the window points will contribute to the result of the convolution, while the value of the signal at the central point will not be taken into account. To achieve this, we add the result of a dense layer to the convolution. That is we multiply the signal at the central point (and direction for \revised{dir-conv}) by the some fixed (learned) matrix and add the resulting vector to the result of the convolution at every point.

\vspace{3mm}
\subsection{Spatial Pooling}
In traditional CNNs pooling layers refer to a way of sampling the signal. They reduce the resolution
of the image by mapping groups of pixels of the original image to a single pixel of a reduced
image. The advantage is two-fold: first it allows to summarize the information over a group of
pixels and to achieve robustness to local perturbations of the signal and second it reduces
computation time and space. A common option called max-pooling is to separate the original image into
consecutive square blocks of pixels, where each block is mapped to a single pixel of the reduced
image by taking the maximal response over the block in each channel. A
closely related notion, which is easily generalizable to meshes is the notion of strided
convolution (\revised{Figure} \ref{strided conv}). It consists of spacing the pixels of the filter window by $D-1$ pixels and applying it
every $D$ pixels thus reducing the output resolution by a factor $D$.
\begin{figure}[H]
    \centering
    \includegraphics[scale=0.3]{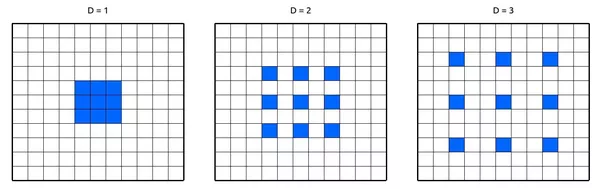}
    \vspace{-0.5mm}
    \caption{Strided convolution on images with convolution kernel in blue. Stride of $0$ (left), $1$ (middle) and $2$ (right).\vspace{-2mm}
    \label{strided conv}}
\end{figure}
We can see strided convolution as a regular convolution applied to the sub-sampled image. For meshes
we define a similar notion by transferring the signal to a coarser mesh and then applying geodesic
convolution on the new mesh. In practice, we simplify the original mesh using the classic quadratic
edge collapse approach \cite{garland97}, which also produces a mapping between the original and the
simplified meshes. We use this to transfer both a signal from the original to the simplified mesh (pooling), or from the simplified mesh to the original (un-pooling) by simply picking the value stored at the closest vertex, among those collapsed to it. 
%
To define pooling for directional (angular) functions we need to also transfer angles from the original mesh to its simplified counterpart.
To transfer local angles from a mesh $M$ to a mesh $N$ given a map from $M$ to $N$ we first compute the $3D$ rotation
sending the normal at each vertex to the normal at its image, this allows us to compare the reference directions on both
shapes in the tangent plane to $N$ and to deduce the oriented angular offset between them. We then simply transfer the
angles by adding the offset to all discretized angles and interpolating the result between consecutive discrete bins,
similarly to our construction of parallel transport.



\section{GPC and parallel transport}

To compute GPC at each vertex of the mesh we used the algorithm proposed by
\cite{melvaer2012geodesic} which is a variation of the fast marching algorithm
\cite{sethian1999level} \revised{which} allows to compute the angular coordinate as well as the geodesic
distance. We extended the algorithm of \cite{melvaer2012geodesic} to also compute the parallel
transport of angles. Fast marching-like algorithms allow to propagate information along meshes, and
rely on a local transfer subroutine. A vertex is selected among a set of candidates based on a
priority criterion, the information stored at the vertex is then propagated to some of its neighbors
based on an update criterion using the local transfer subroutine. The algorithm stops once a
certain final condition is met. In our case a vertex is updated until the radius exceeds a certain threshold $R_{\mathrm{max}} > 0$. In practice, we follow the basic approach of \cite{melvaer2012geodesic} for propagating information, but in addition to
updating the geodesic distance $\rho$ and polar angle $\theta$ we also keep track of the difference in angles between
the reference directions at the source and target points. The original algorithm \cite{melvaer2012geodesic} has a subroutine for updating the GPC
angle at a vertex inside a triangle given estimates at the two other vertices. We use the same subroutine to update the
transported angle given estimates at the two other vertices. Since the representation of directions is different at each
vertex, we must transfer estimates to every new vertex. We transfer the angles along the edges connecting the
vertices. To transport an angle from vertex $i$ to a neighbor $j$ along the edge $e_{ij}$ we first apply rotations to
the GPCs of $i$ and $j$ so that $e_{ij}$ gives the reference direction at $i$ and $e_{ji}$ gives reference direction at
$j$. The rotated GPCs at $i$ and $j$ have a relative angular offset of $\pi$ which allows to deduce the angular offset between the original GPCs. We transfer the angle at $i$ to $j$ along $e_{ij}$ by adding the angular offset between the GPCs at $i$ and $j$. We use the same edge transfer to transfer the reference direction at the source
to its neighbors in order to initialize the algorithm candidates set. This modified version allows us to compute the
angle difference between the initial and final reference directions, which in turn provides the estimate for the
parallel transport of the unit directions along geodesics between points.  

Figure \ref{fig:partransport} illustrates a single GPC window and parallel transport on a sphere. Namely, Figure \ref{fig:5a}
illustrates the angular coordinates of the GPC window via color coding, and the parallel transport of a particular
direction from the center vertex to other vertices in the window, whereas Figure \ref{fig:5b} illustrates our
discretization of the angular and radial bins.
%
\vspace{-3mm}
\begin{figure}[htp]
  \centering
  \subcaptionbox{Parallel transport of reference direction at origin and angular coordinate of the GPC. \label{fig:5a}}{\includegraphics[width=1.15in]{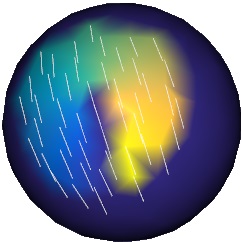}}\hfill%
  \subcaptionbox{GPC window with 2 radial and 8 angular bins and its contributing points (white) \label{fig:5b}}{\includegraphics[width=1.3in]{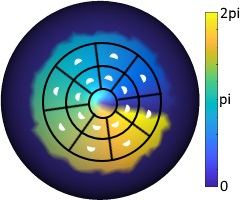}}
  \vspace{-1mm}
  \caption{An example of parallel transport and the GPC window on a sphere. The colors
    represent the angular coordinate in the GPC window.\label{fig:partransport}}
\end{figure}

\section{Evaluation}
\label{sec:eval}

\subsection{Architecture\label{subsec:arch}}

\begin{figure*}[!tbp]
  \centering
  \begin{minipage}[b]{0.3\textwidth}
    \scalebox{1.0}{\includegraphics[scale=0.35]{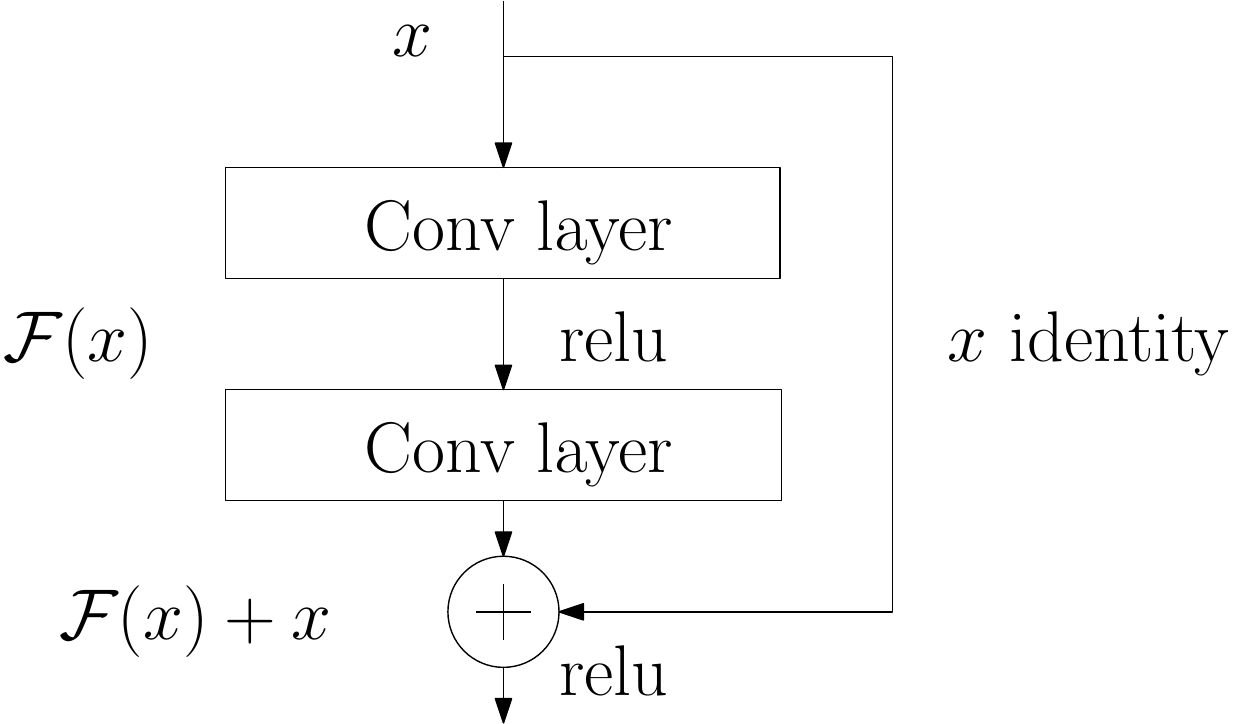}}
    \caption{ResNet block \label{ResnetBlock}}
  \end{minipage}
  \hfill
  \begin{minipage}[b]{0.3\textwidth}
   \scalebox{1.0}{\includegraphics[scale=0.30,  angle=90]{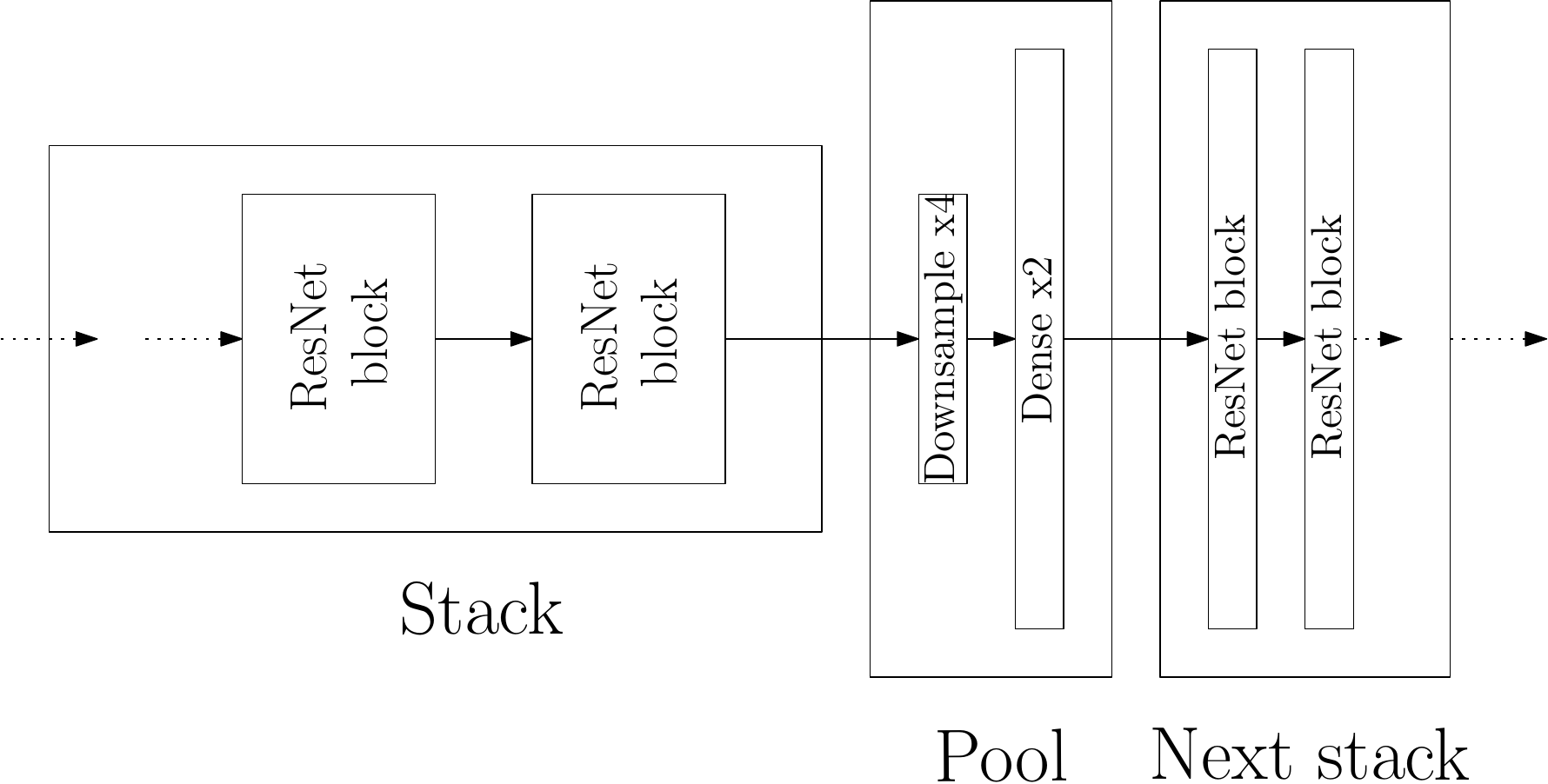}}
    \caption{Our ResNet architecture \label{ResNet}}
  \end{minipage}
    \hfill
  \begin{minipage}[b]{0.3\textwidth}
   \scalebox{1.0}{\includegraphics[scale=0.35]{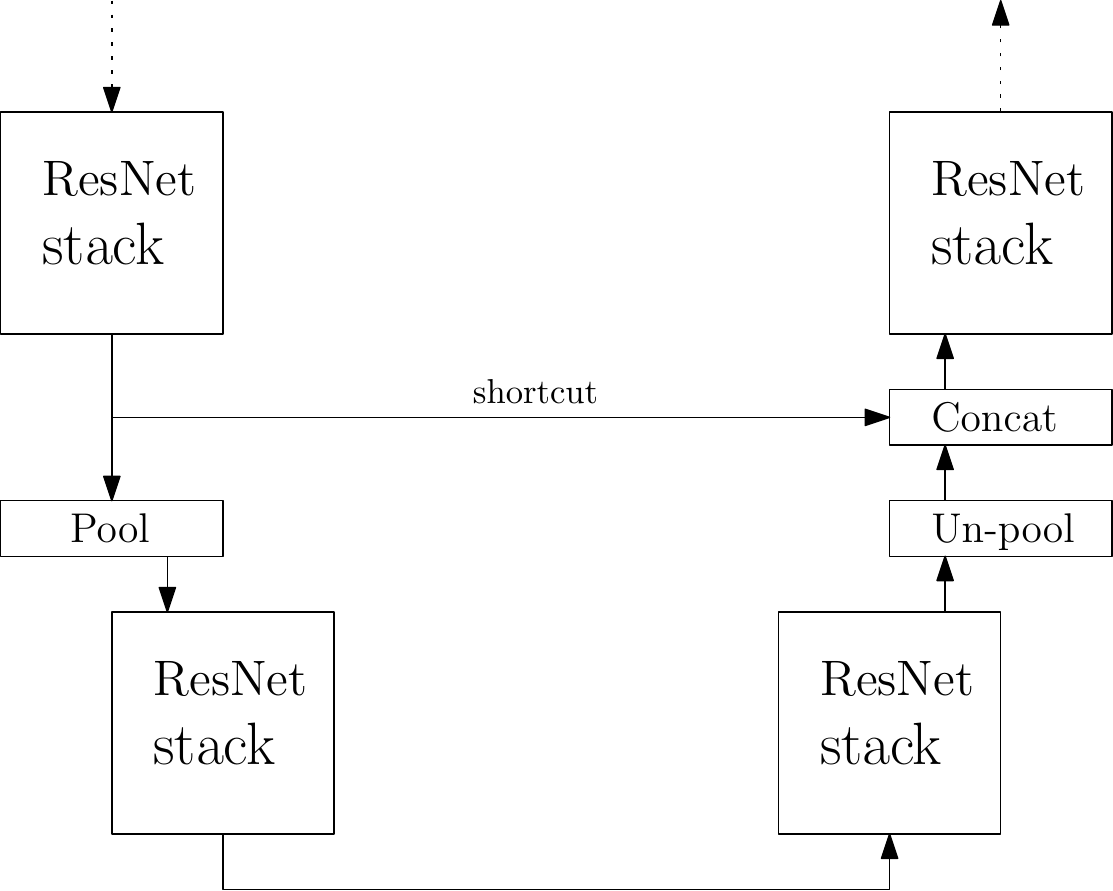}}
    \caption{U-ResNet architecture \label{Uresnet}}
  \end{minipage}
\end{figure*}

We implemented our deep learning pipeline with Keras \citep{chollet2015keras} using Tensorflow backend \citep{tensorflow2015-whitepaper}. We based our architecture on Residual networks \cite{he2016deep}. For classification tasks we used an architecture organized in stacks of ResNet blocks (Figure \ref{ResnetBlock}).
Each stack is the composition of a fixed number of ResNet blocks. After each stack the mesh is \revised{down-sampled} by a factor 4 using a pooling layer. The radius and number of filters is multiplied by two to preserve time and space complexity across stacks (Figure \ref{ResNet}).
For classification tasks we apply angular max pooling and average the signal over the shape we then apply softmax classifier on the resulting vector. For segmentation tasks we need to produce a point-wise prediction therefore the signal needs to be up-sampled back to the original shape. We combined our ResNet architecture with U-net ~\citep{ronneberger2015u}. \revised{Our U-ResNet architecture (Figure \ref{Uresnet})} consists of two blocks an encode and a decode block. The encode block is a copy of our ResNet architecture, the decode block is similar but pooling is replaced by un-pooling. After each stack the window radius is divided by 2, the signal dimension is divided by two and its up-sampled by 4. Shortcut connection are added to help keeping spatial localization information:
\subsection{Experiments}
\revised{In our experiments we compared MDGCNN to GCNN \cite{MasBosBroVan15} and PointCNN
  \cite{li2018pointcnn} in image classification and to GCNN, PointNet++
  \cite{qi2017pointnet++} and Dynamic Graph CNN \cite{wang2017cnn} in shape segmentation and
  shape matching tasks using various input features. We trained all networks using ADAM
  \cite{kingma2014adam} optimizer with learning rate 0.001. Since MDGCNN and GCNN are closely
  related we used the same architectures and window radii for both to make the comparison as fair as
  possible. } For experiments with varying domains we first center the shapes and
normalize them so that they have unit variance, then we use a fixed initial radius for the whole
dataset. The memory complexity is linear in the number of vertices, the number of radial bins and
the number of directional bins of the windows. \revised{MDGCNN} and \revised{GCNN} have similar time
and memory complexity but \revised{MDGCNN} uses more complex tensor indexing to \revised{align}
local windows and performs directional interpolation of the result of convolution. Therefore our
implementation of \revised{MDGCNN} is slightly \revised{slower} in practice. \revised{For example} in our image
classification experiment on \revised{50000 images mapped to spheres with 3000 vertexes} we observed
times of 7 min 13s for one epoch with \revised{GCNN} and 10 min 13 with \revised{MDGCNN} using \revised{a} GTX
1080 graphics card. This speed advantage however is compensated by a the significantly faster convergence of
\revised{MDGCNN as shown in Figure \ref{fig:seg_convergence}}.

\subsection{Image classification \label{img_classification}}
In our first experiment we compare \revised{MDGCNN}, \revised{GCNN} \revised{and PointCNN} on the CIFAR-10 image classification
benchmark ~\citep{krizhevsky2009learning} on different domains. We do not try to achieve state-of-the-art
performance. Our purpose is to demonstrate that \revised{MDGCNN} is able to learn complex signals over mesh domains and is superior
to \revised{GCNN}. The CIFAR-10 dataset consists of $60000$ $32$ by $32$ RGB images in $10$ classes, (airplane, automobile, bird,
cat, deer, dog, frog, horse, ship, truck) with $6000$ images per class. There are $50000$ training images and $10000$
test images. We mapped the images to two different meshes, a regular grid and a sphere. In the case of the sphere, we
parametrized two opposite hemispheres in polar coordinates and then used elliptical mapping ~\citep{fong2015analytical}
\revised{(shown in Figure \ref{fig:elliptical_map})} to map the images to both hemispheres.

\begin{wrapfigure}{l}{3.0cm}
\vspace{-13pt}
\includegraphics[scale=0.35]{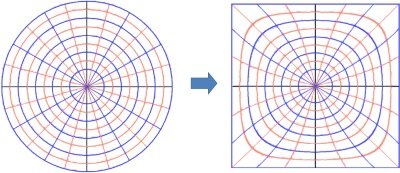}
\vspace{-20pt}
\caption{Elliptical mapping from a disc to a square \label{fig:elliptical_map}}
\vspace{-10pt}
\end{wrapfigure} Once the mapping is computed every mesh vertex is equipped with 2D coordinates which allows us to pull back the images using bilinear interpolation inside pixels. 
The resulting image on the sphere is then linearly interpolated
inside each triangle as shown on Figure \ref{cifar10 spheres}. Let us note that in both the grid and the sphere case, using
principal curvature directions to fix the reference orientation, as done in \cite{MasBosBroVan16,monti2017}, would not
be meaningful as every point is an umbilic on these domains, so that \emph{every} direction is a principal one.

\begin{figure}[H]
  \centering
  \includegraphics[width=0.55in]{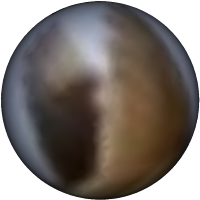}\hfill%
  \includegraphics[width=0.55in]{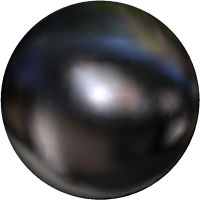}\hfill%
  \includegraphics[width=0.55in]{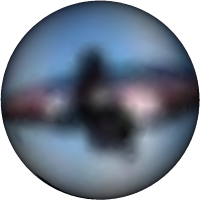}\hfill%
  \includegraphics[width=0.55in]{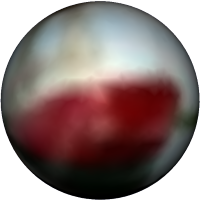}\hfill%
  \includegraphics[width=0.55in]{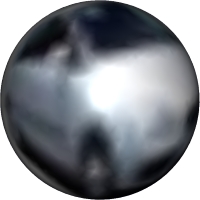}\hfill%
  \includegraphics[width=0.55in]{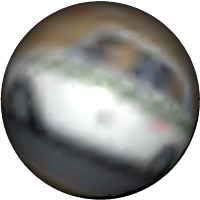}\hfill%
  \includegraphics[width=0.55in]{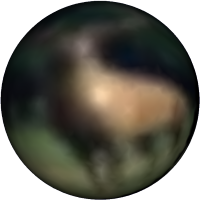}\hfill%
  \includegraphics[width=0.55in]{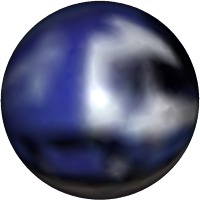}\hfill%
  \includegraphics[width=0.55in]{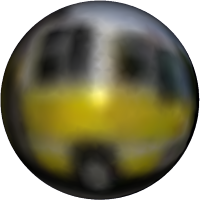}\hfill%
  \includegraphics[width=0.55in]{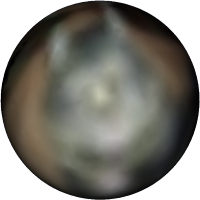}\hfill%
  \includegraphics[width=0.55in]{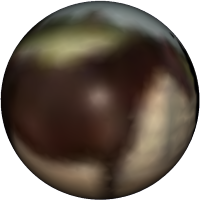}\hfill%
  \includegraphics[width=0.55in]{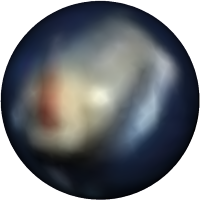}\hfill%
  \caption{Examples of CIFAR-10 images mapped to a sphere using elliptical mapping}
  \label{cifar10 spheres}
\end{figure}

We trained our ResNet architecture (Figure \ref{ResNet}) in batch of 10 images  with 3 ResNet stacks, one ResNet block (Figure \ref{ResnetBlock}) per stack and 16 filters for the first stack, we chose an initial radius of $1.8$ pixels, that is $1.8$ in the grid case and $(32 \times 1.8)/\pi$ on the sphere. The network converged after 50 epochs. 
\begin{table}
\centering
\caption{Classification accuracy on the CIFAR-10 dataset using different methods on different domains.\label{class acc}}
\begin{tabular}{ |p{1.5cm}||p{1.5cm}|p{1.5cm}|p{1.5cm}|  }
 \hline
 \multicolumn{3}{|c|}{CIFAR 10 - classification} \\
 \hline
 Method & Domain & Accuracy\\
 \hline
 {GCNN} & sphere & 0.6712\\
 \textbf{MDGCNN} & sphere & $\textbf{0.7706}$\\
 {GCNN} & grid & 0.6767\\
 \textbf{MDGCNN} & grid & $\textbf{0.7932}$\\
 \revisedt{PointCNN} & \revisedt{grid} & \revisedt{0.7669}\\
 \hline
\end{tabular}
\end{table}
The results scores in Table \ref{class acc} show a clear advantage for \revised{MDGCNN over GCNN},
we also observe similar scores \revised{for MDGCNN} on different domains despite the important
stretching introduced by the elliptical mapping of images to the sphere and the irregularity of the
sphere meshing compared to a grid mesh. \revised{In their recent work \cite{li2018pointcnn} Li et
  al. applied PointCNN to the CIFAR-10 classification benchmark. The results summarized in Table
  \ref{class acc} suggest that our method compares favorably to PointCNN in the image classification context.}
  
\subsection{Shape segmentation}
In our second experiment we compared our \revised{MDGCNN against GCNN
  and several other state-of-the-art methods: Toric cover CNN \cite{maron2017convolutional},
  PointNet++ \cite{qi2017pointnet++} and Dynamic graph CNN \cite{wang2018dynamic}.} We evaluated
all methods on the human segmentation benchmark proposed by \cite{maron2017convolutional}. This
dataset consists of 370 models from SCAPE, FAUST, MIT and Adobe Fuse \citep{fuse}. All models are
manually segmented into eight labels, three for the legs, two for the arms, one for the body and one
for the head. The test set is the 18 models from the SHREC07 dataset in human category. 
\begin{figure}[H]
  \centering
  \includegraphics[width=0.13\linewidth]{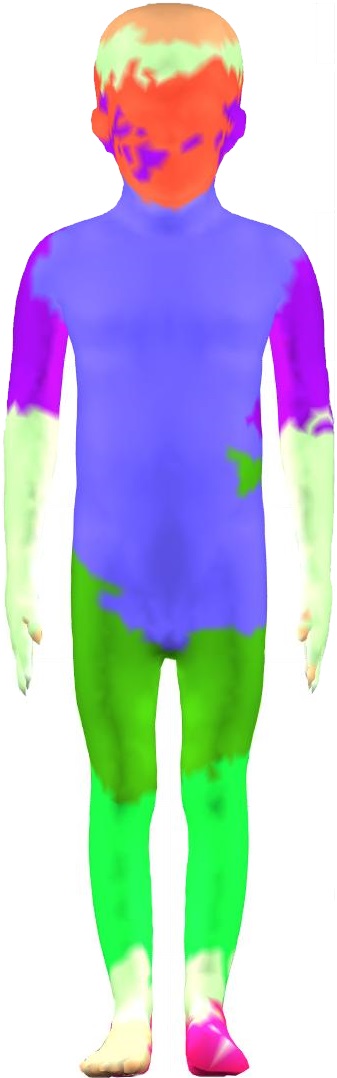}\hfill%
  \includegraphics[width=0.13\linewidth]{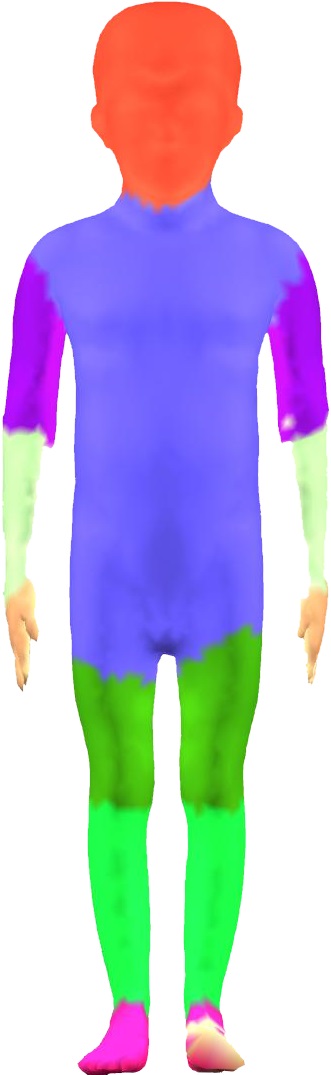}\hfill%
  \includegraphics[width=0.13\linewidth]{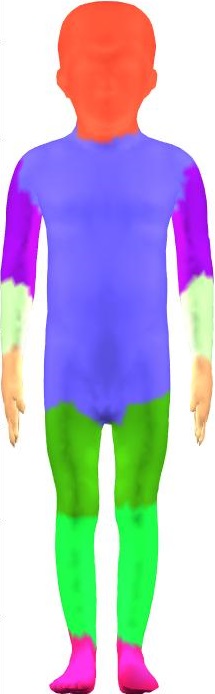}
  
  \includegraphics[width=0.13\linewidth]{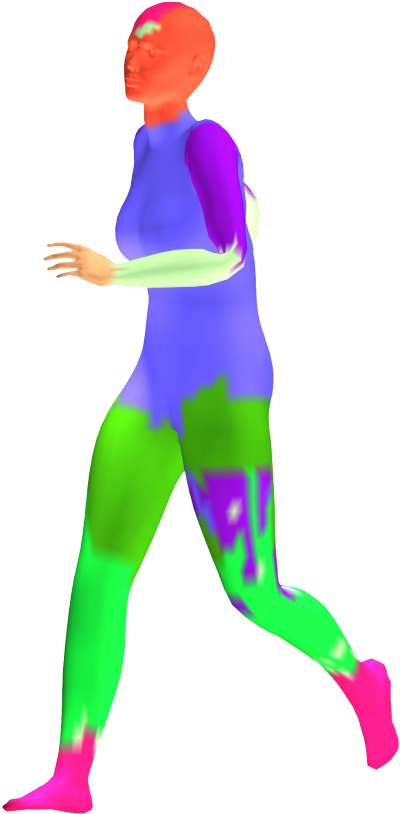}\hfill%
  \includegraphics[width=0.13\linewidth]{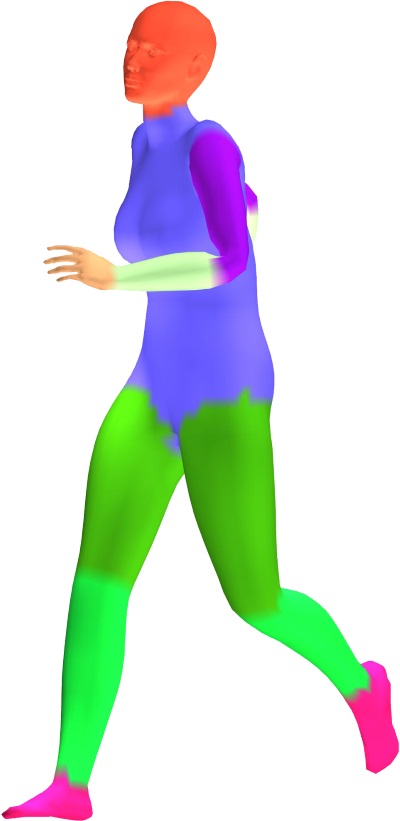}\hfill%
  \includegraphics[width=0.13\linewidth]{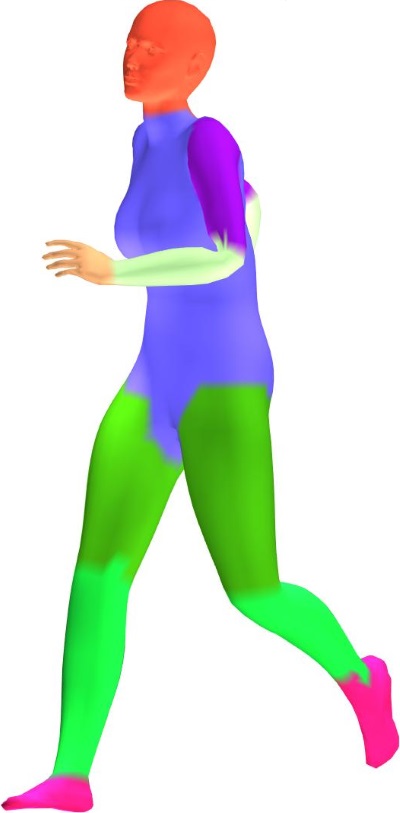}

  {\includegraphics[width=0.15\linewidth]{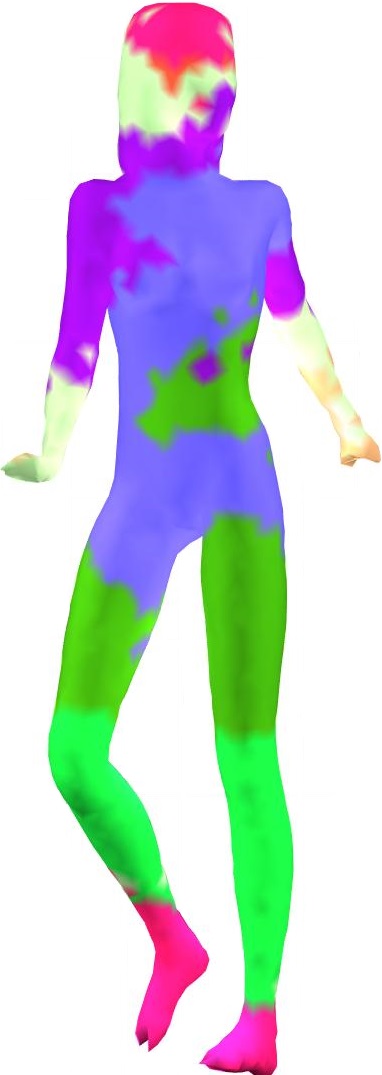}}\hfill%
  {\includegraphics[height=1.2in]{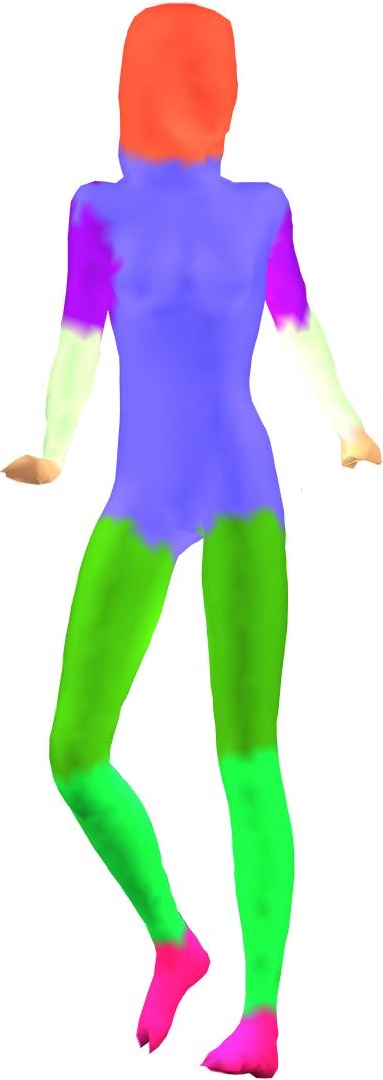}}\hfill%
  {\includegraphics[width=0.15\linewidth]{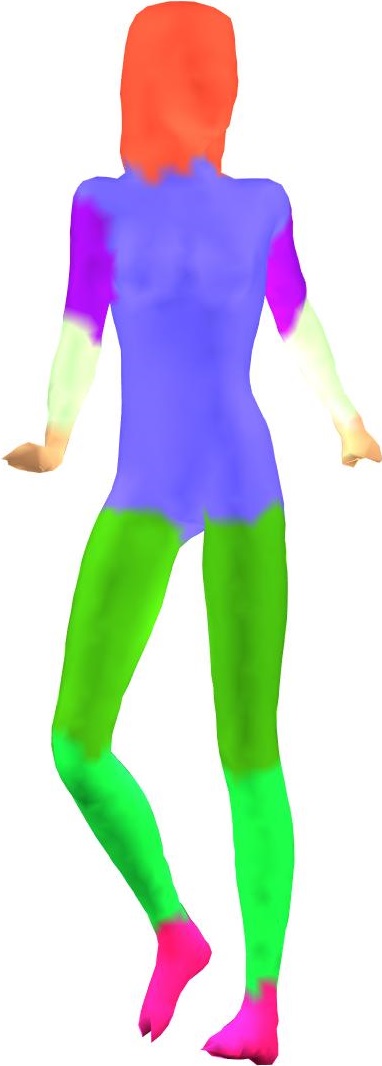}} \\

\vspace{-2mm}
  \subcaptionbox{\revised{GCNN}}{\hspace{2cm}}\hfill
 ~~~~~~~~~\subcaptionbox{\revised{MDGCNN}}{\hspace{1.6cm}}\hfill
~~~~~~~~\vspace{5mm}\subcaptionbox{Ground truth}{\hspace{2cm}}\hfill
\vspace{-3.5mm}
  \caption{Human shapes segmentation comparison between standard GCNN and \revised{MDGCNN (ours)} using the 3D coordinates of the shapes as
    input. The data is augmented by random rotations and scaling at training to ensure rigid motion invariance and
    improve robustness of the learning. \label{fig:seg_examples} \vspace{-2mm}}
\end{figure}
\begin{table}[t!]
\caption{\revised{Segmentation accuracy of several methods on the human body dataset introduced in \cite{maron2017convolutional}.}}
\begin{tabular}{ |p{1.85cm}||p{2.15cm}|p{1.cm}|p{0.9cm}|p{1.05cm}|  }
 \hline
 \multicolumn{5}{|c|}{Human body segmentation} \\
 \hline
 Method & Input feat.& \# feat. & epochs & accuracy\\
 \hline
 Toric cover  & WKS, AGD, curv.& 20+2+4 & 20 &$0.88$\\
 \revisedt{Pointnet++} & \revisedt{3D coords} & \revisedt{3} & \revisedt{200} & \revisedt{\textbf{0.9077}}  \\
 \revisedt{DynGraphCNN} & \revisedt{3D coords} & \revisedt{3} & \revisedt{200} & \revisedt{0.8972} \\
  \hline
 GCNN & 3D coords & 3 & 200 &0.7649\\
 \textbf{MDGCNN} & 3D coords & 3 & 50 &$\textbf{0.8861}$\\
 GCNN & WKS, curv. & 20+4 & 50 &0.8489\\
 \textbf{MDGCNN} & WKS, curv. & 20+4 & 50 &\textbf{0.8612}\\
 GCNN & SHOT$_{6}$  & 64 & 50 &0.3888\\
 \textbf{MDGCNN}& SHOT$_{6}$ & 64 & 50 &\textbf{0.8530}\\
 GCNN& SHOT$_{9}$ & 64 & 50 & 0.7410\\
 \textbf{MDGCNN}& SHOT$_{9}$ & 64 & 50 & \textbf{0.8879}\\
 GCNN& SHOT$_{12}$ & 64 & 50 & 0.8640\\
 \textbf{MDGCNN}& SHOT$_{12}$ & 64 & 50 &$\textbf{0.8947}$\\
 \hline
\end{tabular}
\vspace{-3mm}
\label{seg acc}
\end{table}
\begin{table}[H]
\caption{Standard deviation of test accuracy of MDGCNN and GCNN across 5 independent runs.}
\begin{tabular}{ |p{1.5cm}||p{1.5cm}|p{1.5cm}|p{1.4cm}| }
 \hline
 \multicolumn{4}{|c|}{Human body segmentation} \\
 \hline
 method & Input & epochs & standard deviation\\
 \hline
\textbf{MDGCNN} & 3D coords & 50 & \textbf{0.01059}  \\
GCNN & 3D coords & 50 & 0.11487 \\

\textbf{MDGCNN} & 3D coords & 100 & \textbf{0.02831}  \\
GCNN & 3D coords & 100 & 0.08007 \\

\textbf{MDGCNN} & 3D coords & 200 & \textbf{0.00454}  \\
GCNN & 3D coords & 200 & 0.05513 \\
 \hline
\end{tabular}
\label{seg var}
\end{table}
\begin{figure}[H]
\center
\vspace{-3mm}
  \includegraphics[width=0.9\linewidth]{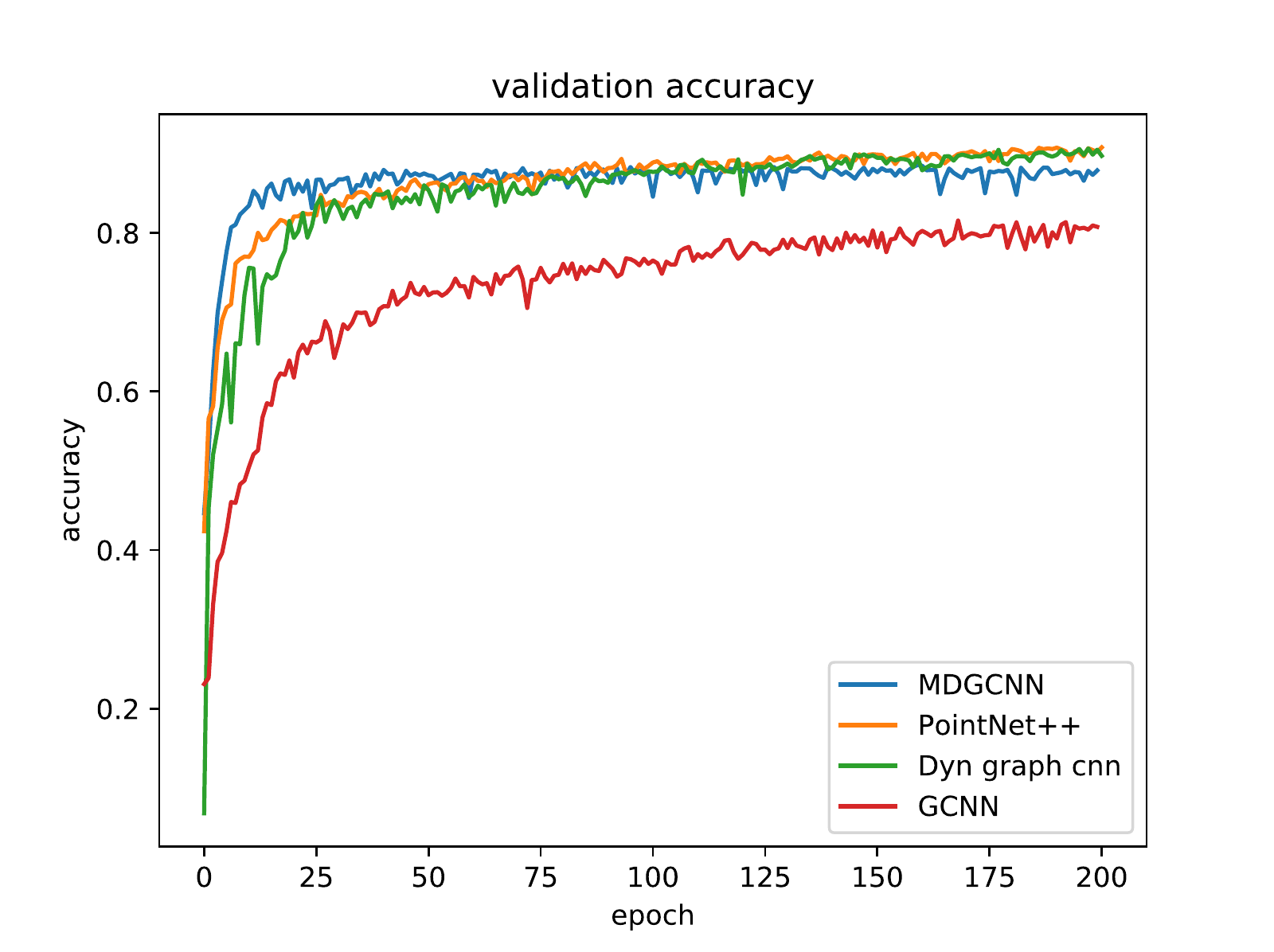}\hfill%
  \vspace{-1mm}
  \caption{\revised{Validation accuracy per epoch on the human body segmentation benchmark introduced in \cite{maron2017convolutional} using global 3D coordinates as input.}}
  \label{fig:seg_convergence}
\end{figure}
We used our U-ResNet architecture with $2\times2$ of two blocks, $16$ filters on the first layer and an initial
radius of $0.1$. Since we did not have access to the same features as used in
\cite{maron2017convolutional}, we used different inputs, SHOT \cite{salti2014shot} and WKS
\cite{aubry2011wave} descriptors as well as the $3$D coordinates of the shape, we denote by
SHOT$_{k}$ the $64$-dimensional \revised{SHOT} descriptor with window radius equal to $k$ percent of the shape
area.  For the experiments taking the $3$D coordinates of the shape as input we first apply a random
rotation and scaling between $0.85$ and $1.15$ to learn features that are robust to global
transformations. \revised{Table \ref{seg acc} summarizes the scores obtained by different methods.}

\revised{In addition to improving accuracy, we have also observed that training MDGCNN can be
  significantly more stable compared to GCNN. In Table \ref{seg var} we report the standard
  deviation of the validation accuracy of MDGCNN and GCNN across 5 independent runs with
  different number of epochs. Here the training data is fixed and the variance is only due to the
  stochastic nature of the optimization procedure. }
We observe in Figure \ref{fig:seg_examples} that \revised{MDGCNN} is  better than \revised{GCNN} at learning features that are invariant to rigid motion directly from the
3D coordinates of the shape without \emph{a priori} descriptors assumed in the data, simply via data
augmentation. Learning global features from the $3$D coordinates of the shape vertices requires aggregating information
between possibly distant points. Since \revised{directional convolution} allows better communication between distant points, \revised{MDGCNN}
noticeably outperforms \revised{GCNN} when using $3$D coordinates as input, producing much smoother results. This is also
illustrated in the qualitative results shown in Figure \ref{fig:seg_examples}. On the other hand, shape descriptors
often carry more global information about the points for example SHOT relies on local histograms counting the mesh
vertexes and normals into bins, while WKS relies on diffusion processes along the surface.  We observe close performance
in favor of \revised{MDGCNN} when using WKS. For SHOT we notice that \revised{GCNN} performance considerably degrades when reducing the
radius of the SHOT windows while \revised{MDGCNN} is able to maintain its performance much better. \revised{ PointNet++ and Dynamic Graph CNN behaved similarly on this benchmark we observed slightly slower convergence compared to MDGCNN but slightly better final results (see Figure \ref{fig:seg_convergence}).
We observe similar accuracy between MDGCNN and the Toric Cover method of
\cite{maron2017convolutional}. We note, however, that the Toric Cover method is based on
non-canonical mappings from the torus to the surface and requires considerable data augmentation by
examining many such mappings. This results in long training and prediction computation times. According to \cite{maron2017convolutional}
it takes about 5 hours for the Toric Cover method to complete one epoch at training using 6 Nvidia K80 GPUs. Using 50 different mappings, it takes 45 minutes to calculate predictions on the human class of SHREC07 while it takes 1 min 10s to train our MDGCNN network on one epoch using a single Nvidia TITAN Xp card and 2.2317s to calculate predictions on the test set.}

\subsection{Shape matching}

\begin{figure}[t!]
  \centering
\includegraphics[width=0.99\linewidth]{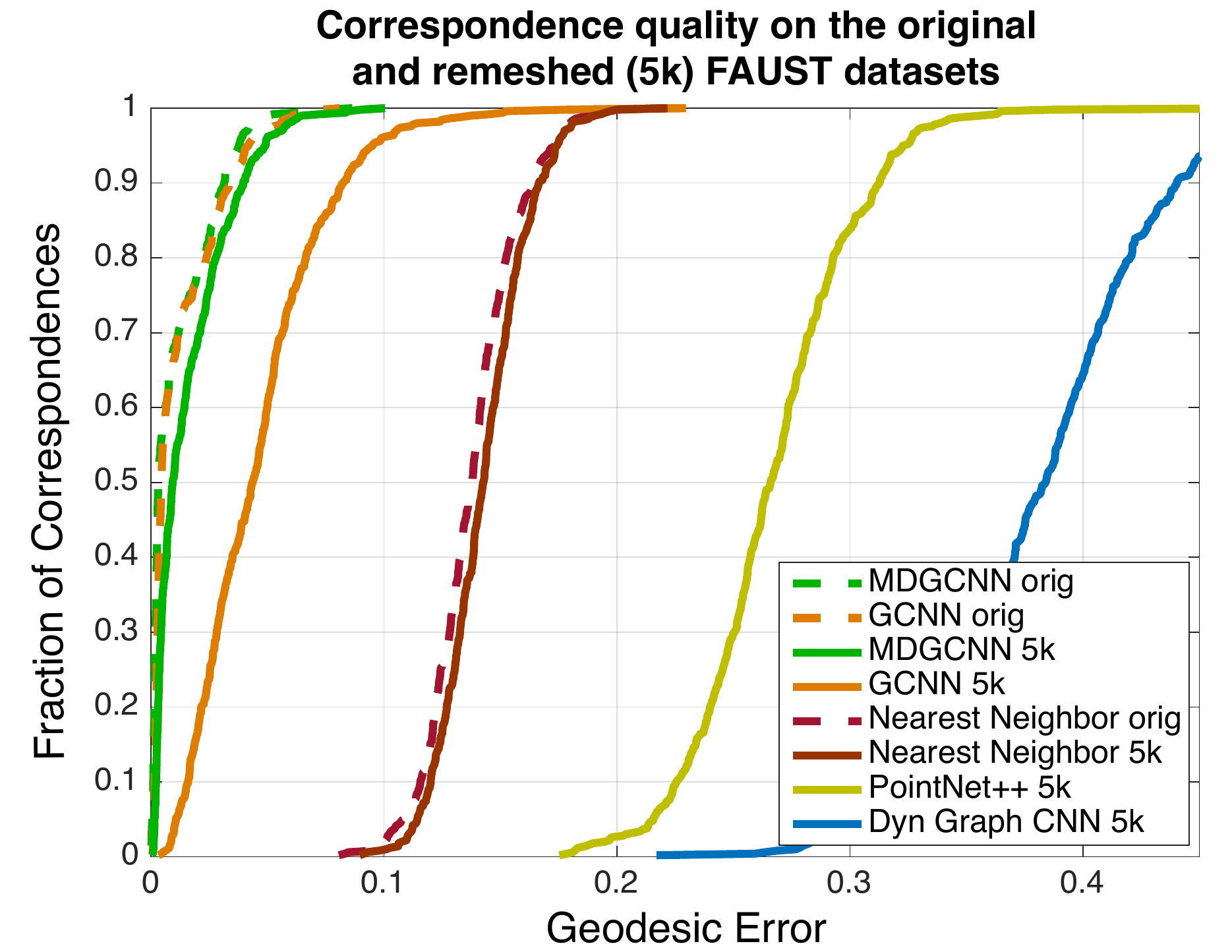}
\caption{Performance of shape correspondence on the
FAUST dataset and its re-meshed version (5k) evaluated by plotting the fraction of correspondences within a geodesic
radius of the ground truth. Higher curve corresponds to better performance.\label{fig:geod_err}}
\end{figure}

\begin{figure}[t!]
  \centering
  \includegraphics[width=0.32\linewidth]{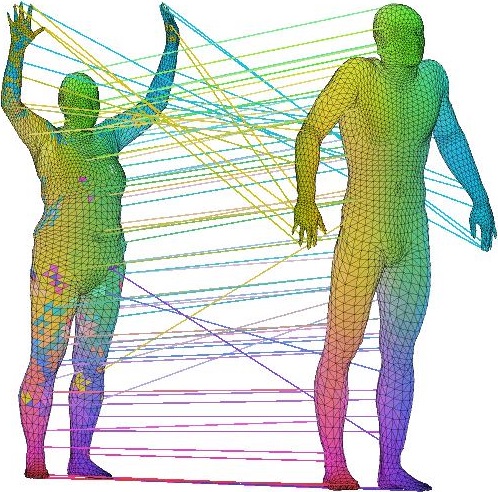}
  \includegraphics[width=0.32\linewidth]{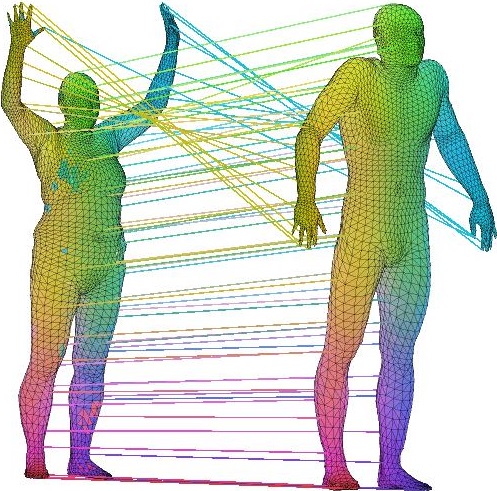}
  \includegraphics[width=0.32\linewidth]{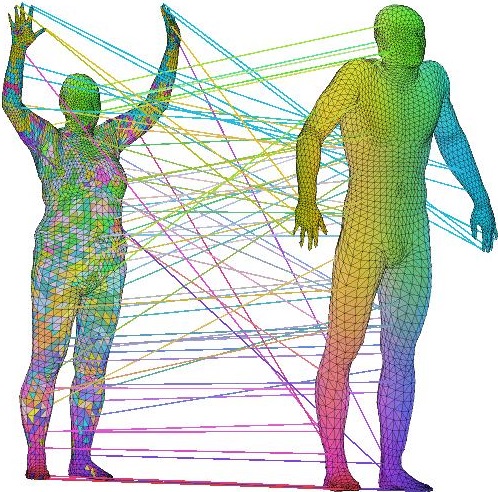}\\

    \includegraphics[width=0.32\linewidth]{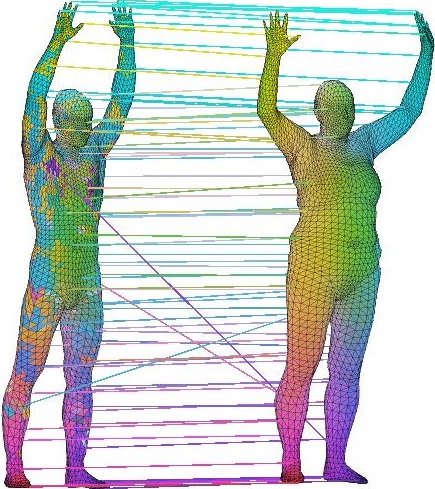}
    \includegraphics[width=0.32\linewidth]{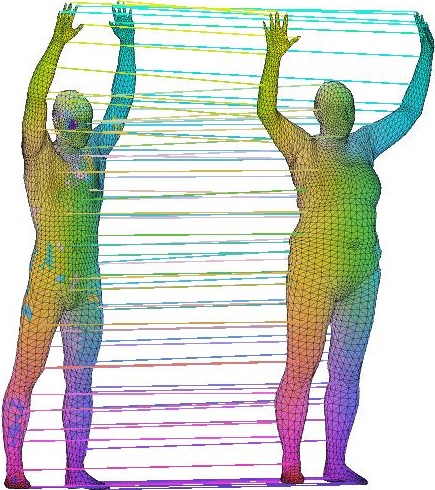}
    \includegraphics[width=0.32\linewidth]{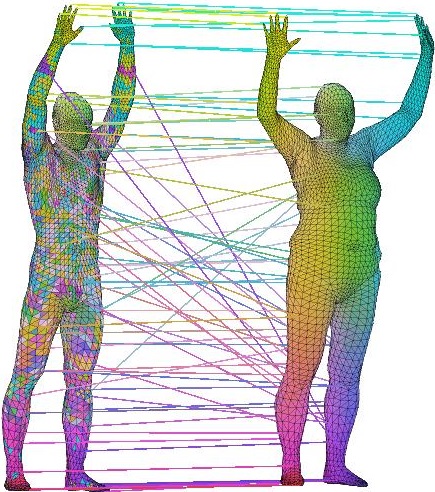}\\

  \subcaptionbox{\revised{GCNN}}{\hspace{2cm}}\hfill
  \subcaptionbox{\revised{Ours}}{\hspace{1cm}}
  \subcaptionbox{NN in descriptor space}{\hspace{4cm}}
  \vspace{-1mm}
\caption{Shape correspondence  on the
a remeshed (5k) version of the FAUST dataset using : (a)  GCNN  \cite{MasBosBroVan15} (b) Our method
\revised{MDGCNN}, and (c) using the nearest neighbor in the SHOT descriptors space. \label{fig:matching_examples}}
\end{figure} 

Finally, we also applied our pipeline in the context of non-rigid shape matching on the FAUST
dataset, used in \cite{MasBosBroVan15}. In this experiment, goal is to predict the index corresponding to each vertex in
the $0$-th shape of the dataset. The original experiment in \cite{MasBosBroVan15} used the GCNN architecture using SHOT
descriptors as input. In order to remove the bias present in the data, due to all meshes sharing the same connectivity,
we also re-meshed the FAUST shapes from $6890$ vertexes to $5000$ vertexes to evaluate the robustness of both
algorithms. We used our U-ResNet architecture with $2\times2$ stacks of two blocks \revised{with} $16$ filters on the first
layer and an initial radius of $0.1$ taking SHOT$_{12}$ as input. The network was trained for 100 epochs for \revised{MDGCNN} and
200 for \revised{GCNN}. We measured the geodesic error between the predicted labels and the ground truth on the $0$-th shape for
both the original FAUST shapes and the re-meshed ones (5k) \revised{(Figure \ref{fig:geod_err})}.

Contrary to \cite{MasBosBroVan15} we do not use post
processing on the predictions of the algorithm, and measure the accuracy directly on the output of the networks. 
As a baseline we also measured the geodesic error of correspondences obtained on $100$ random pairs
of the test set using nearest neighbors in the space of SHOT descriptors. We see that learning
techniques vastly outperform the baseline. On the original set with shapes having the same
connectivity \revised{MDGCNN} and \revised{GCNN} behave similarly. On the re-meshed set
\revised{GCNN} noticeably degrades while \revised{MDGCNN} is able to maintain its precision. Figure
\ref{fig:matching_examples} shows several examples \revised{of correspondences between pairs of
  shapes computed using different methods}. 

\revised{Figure \ref{fig:geod_err} also shows a
  comparison with PointNet++ \cite{qi2017pointnet++} and Dynamic graph CNN \cite{wang2018dynamic}
  using 3D coordinates as input (trained for 200 epochs). Note that these methods are designed for
  segmentation tasks and are not suited for shape matching in their current form.}

\subsection{Limitations \& Future Work}
Our pipeline still has important limitations. The learning process depends on the construction of local coordinate
systems which might not be suited to describe certain types of patterns possibly introducing a bottleneck to the
learning. More specifically constructions such as geodesic polar coordinates and parallel transport are purely intrinsic
based on the metric of the surface, therefore some areas that are close in the embedding space might be considered far
in this representation. A typical limiting case of purely intrinsic pipelines such as ours is when some region of the
shape is made of multiple parts that seem to merge in a single object, they might very well fail to recognize it as as a
such. We illustrate this by applying our segmentation pipeline to the bird class of PSB dataset, shown in Figure
\ref{fig:bird_bad}. From a purely \revised{topological} perspective the bird's wings are locally disconnected as they are made of
many feathers.

\begin{figure}[t!]
  \centering
  \subcaptionbox{\revised{MDGCNN} 3D}{\includegraphics[width=1.5in,height=1.5in]{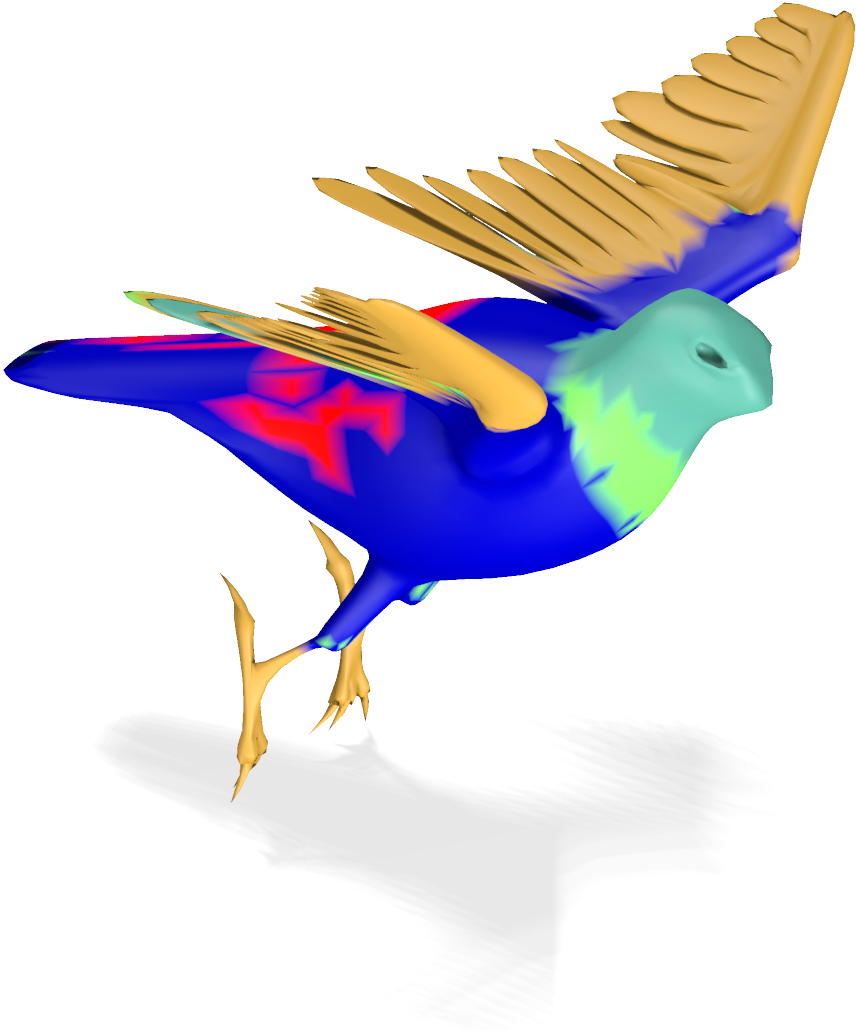}}\hfill%
  \subcaptionbox{Ground truth}{\includegraphics[width=1.5in,height=1.5in]{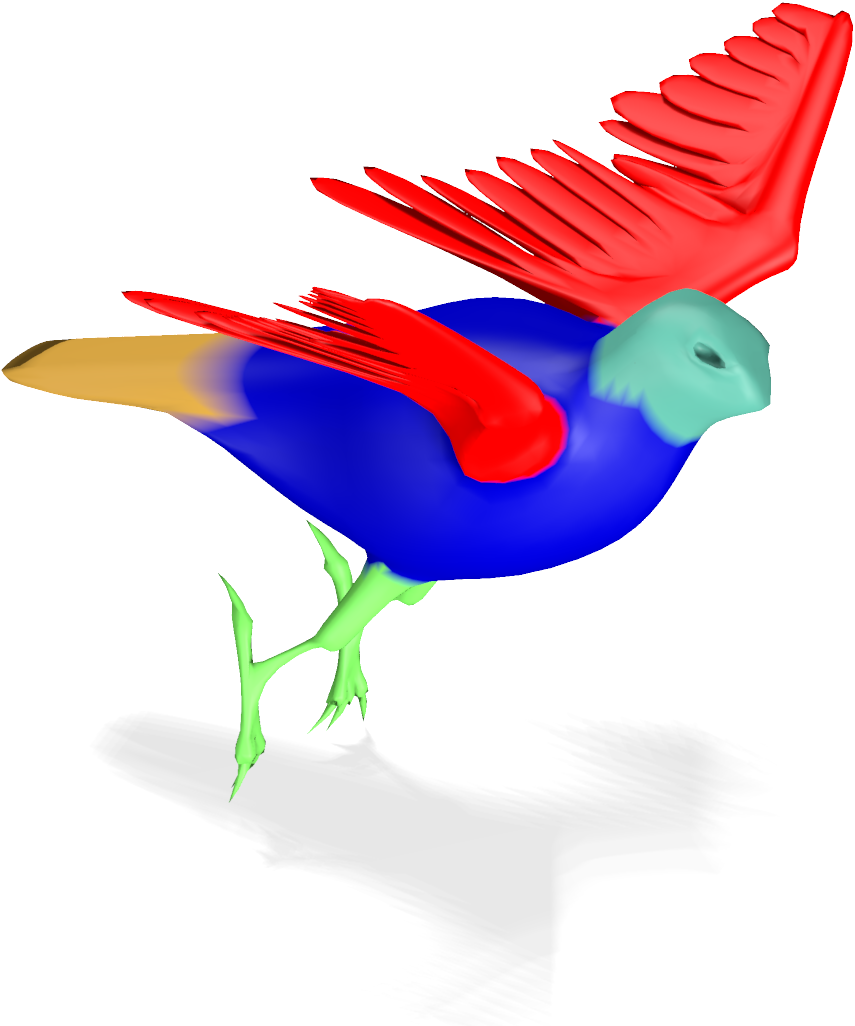}}
 \caption{Limiting case of purely intrinsic pipelines\label{fig:bird_bad}}
\end{figure}

As we can see in Figure \ref{fig:bird_bad} our pipeline most likely recognized each independent feather as a bird tail,
failing to recognize the wing in its entirety. Another limiting case of pipelines based on local coordinates is the
presence of very thin nearly degenerate parts as the bird's legs and claws. Two dimensional systems of coordinates might
not be appropriate to model \revised{near} one dimensional parts. On the other hand, Figure \ref{fig:bird_good} shows that in
the absence of such limiting cases a bird shape can be properly segmented by \revised{MDGCNN}.

\begin{figure}[t!]
  \centering
  \subcaptionbox{\revised{MDGCNN} 3D}{\includegraphics[width=1.3in, height=1.5in]{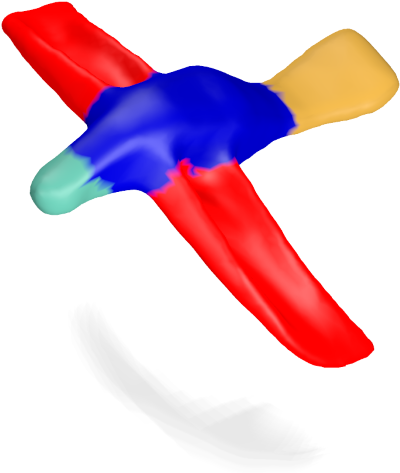}}\hfill%
  \subcaptionbox{Ground truth}{\includegraphics[width=1.3in, height=1.5in]{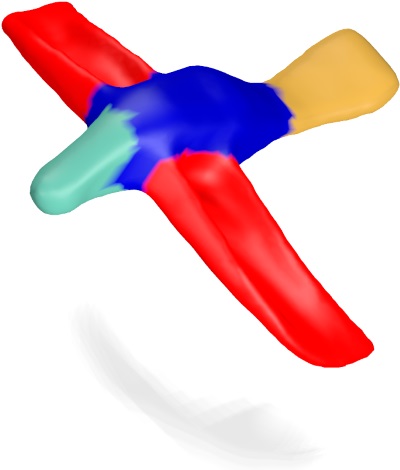}}
\caption{Successful segmentation with our approach.\label{fig:bird_good}}
\end{figure}

\revised{Another limitation shared by CNNs for image processing is the choice of scale (size) of 
  local windows. The same features can have different meaning depending on the scale at which they are detected. Since our current implementation only uses a single fixed scale, this may limit the generalization power of our method in situations where relative proportions of object parts are different from the examples in the training set. 
  In image processing the scale issue has been addressed e.g. by using the notion of
  Inception modules \cite{szegedy2015going}. We leave implementation of inception modules within the
  MDGCNN framework for future work.

  Our discretization also has some issues. For example, the parallel transport of a direction might fall between
  angular bins attached to the target vertex. For this reason we used linear interpolation between
  the two adjacent directions. However this still introduces a non-negligible error which grows with the number of layers especially for low number of angular bins. A possible alternative is to
  use the Fourier basis to represent directional signals at all points. This would allow exact
  direction transfer since rotations act linearly on the basis functions. However activation
  functions and operations such as angular max pooling would be harder to perform.}

Perhaps the most immediate, and relatively straightforward extension of our work would be to use our \revised{multi directional} approach in the context of local parameterizations via anisotropic diffusion kernels \cite{MasBosBroVan16}, but
without assuming a canonical reference direction at every point. More generally, it would be interesting to use multi-scale approaches with different ways of computing
the coordinate systems and transporting the information depending on the scale possibly using extrinsic information to help the network learning different semantic interpretations across different scales in order to improve the
overall robustness. \revised{Finally other constructions of directional convolution with potentially
  stronger properties are possible, and we give an example in the Appendix Section \ref{app:stronger_dc}.}

\section{Conclusion}
In the this work we presented a novel approach to define convolution over curved surfaces that do not admit global or canonical coordinate systems. Namely, we proposed a way to align \emph{local} systems of coordinates allowing to build and learn consistent filters that can then be naturally used across different domains. Our approach is built on the notion of directional functions, which generalize real-valued signals. We proposed a technique to convolve such directional functions with learned template (or filter) functions to produce new directional functions. This allows us to compose these convolution operations, without any loss of directional information across layers of a neural network.

We showed that our new approach compares favorably to its most direct analogues producing smoother and more robust results and can compete with more recent techniques, even when using ``weak'' input signals such as the 3D coordinates of the points. We believe that idea of \revised{multi-directional convolution can be generalized} and can open the door to addressing many other situations where the data representation is ambiguous \revised{by allowing} a neuron \revised{to} have different ways of interpreting its input \revised{and to communicate with its contributors}. For example, in our case a neuron can process its input depending on the choice of reference direction and propagate this information to its contributors. However, other types of \revised{alignment between layers} can be thought of across different types of ambiguities in the signal as well.

\begin{acks}
  \revised{ The authors would like to thank the anonymous reviewers for their valuable comments and
    helpful suggestions. Parts of this work were supported by a Google Focused Research Award and
    the ERC Starting Grant No.~758800 (EXPROTEA).}
\end{acks}

\bibliographystyle{ACM-Reference-Format}
\bibliography{bibliography}

\section{Appendix}

\revised{
\paragraph{Proof of Proposition \ref{prop:dg_to_gc}}
We have:
\[
\begin{aligned}
(\overline{\exp}^X_{x})^{*}\tilde{f}(p)
&
=
\tilde{f}(\exp^X_x(p), \Gamma_{x, p}(p / ||p||)
\\
&
=
f(\exp^X_x(p))
:=
((\exp^X_x)^*f)(p)
\end{aligned}
\]
thus:
\[
\begin{aligned}
\max_{v \in T_x X} (\tilde{f} \star k)(x, v)
&
=
\max_{v \in T_x X} \langle(\overline{\exp}^X_{x})^{*}\tilde{f}, \tau_{x,v}^*k \rangle_{L^2}.
\\
&
=
\max_{v \in T_x X} \langle(\exp^X_{x})^{*}f, \tau_{x,v}^*k \rangle_{L^2}.
\\
&
=:
f \circledast k ( x )
\end{aligned}
\]
}

\paragraph{Proof of Proposition \ref{prop:equivar}}
\revised{The first equality holds directly by definition. Namely, by applying the definition of directional convolution of the angular function $\varphi_{Re}$ with reference direction $Re$ we have:
\[
\varphi_{Re} \star k := (\varphi \star k)_{Re}.
\]
To prove the second equality observe that $Re_x$ is simply the rotation of $e_x$ by angle $\theta_x$ therefore:
\[
\begin{aligned}
(\varphi \star k)_{Re}(x, \theta)
&
:=
(\varphi \star k)(x, Re_x(\theta))
\\
&
=
(\varphi \star k)(x, e_x(\theta +\theta_x))
\\
&
=:
(\varphi \star k)_e(x, \theta +\theta_x)
\\
&
=:
(\varphi_e \star k)(x, \theta +\theta_x)
\end{aligned}
\]
which proves the proposition using the coordinate free definition of the directional convolution
operator. 

Below, we provide an alternative proof that shows that Proposition \ref{prop:equivar} still holds
when the directional convolution operator $\star$ is defined in the angular coordinate setting,
which thus provides a more direct link to the practical setting.} The
family of unit vectors $e_x \in T_x X$ defines polar coordinate systems on each tangent plane
$T_x X$. A tangent vector $p \in T_x X$ is then represented by a tuple $(r, \theta)$ where $r$ is
its radius and $\theta$ is the angle between $e_x$ and \revised{$p = re_x(\theta)$ where
  $e_x(\theta)$ denote the direct rotation of $e_x$ by angle $\theta$}. We represent unit vectors
only by their angle. Adapting the notations of Eq. (\ref{eq:dirconv}) in the polar coordinate systems
defined by $e$ we denote:
\[
\begin{aligned}
&
\tau_{x, \theta}^{e} := \tau_{x, e_x(\theta)}
\\
&
\exp_{x, e_x}^X(r, \theta) := \exp_{x}^X(r.e_x(\theta))
\end{aligned}
\]
\revised{and $\Gamma_{x, (r, \theta_1)}^{e}( \theta_2 )$ is the angle between $e_{\exp_{x, e_x}^X(r, \theta)}$ and $\Gamma_{x, re_x(\theta_1)}( e_x(\theta_2) )$ i.e.}
\[
\revised{e_{\exp_{x, e_x}^X(r, \theta)}(\Gamma_{x, (r, \theta_1)}^{e}( \theta_2 )) := \Gamma_{x, re_x(\theta_1)}( e_x(\theta_2) )}
\]
We first observe that:
\[
\begin{aligned}
&
\varphi_{R.e}(x, \theta) = \varphi_{e}(x, \theta + \theta_x),
\\
&
\tau_{x, \theta}^{R.e} = \tau_{x,\theta + \theta_x}^e,
\\
&
\exp_{x, R_x.e_x}^X(r, \theta) = \exp_{x, e_x}^X(r, \theta + \theta_x).
\\
&
\revised{\Gamma^{R.e}_{x, \theta_1}(\theta_2) = \Gamma^{e}_{x, \theta_1 + \theta_x}(\theta_2 + \theta_x )}
\end{aligned}
\]
\revised{We have:
\[
\begin{aligned}
&
(R.e)_{\exp^X_{x, R.e_x}(r, \theta_1)}(\Gamma^{R.e}_{x,(r, \theta_1)}(\theta_2))
=
\\
&
e_{\exp^X_{x, e_x}(r, \theta_1 + \theta_x)}(\Gamma^{e}_{x,(r, \theta_1 + \theta_x)}(\theta_2 + \theta_x))
=
\\
&
R.e_{\exp^X_{x,R.e_x}(r, \theta_1)}(\Gamma^{e}_{x,(r, \theta_1 + \theta_x)}(\theta_2 + \theta_x)  - \theta_{\exp^X_{x, e_x}(r, \theta + \theta_x)} )
\end{aligned}
\]
Thus:
\[
\Gamma_{x, (r, \theta_1)}^{R.e}( \theta_2 )
=
\Gamma_{x, (r, \theta_1+\theta_x)}^{e}( \theta_2 + \theta_x) - \theta_{\exp^X_{x, e_x}(r, \theta + \theta_x)}
\]}
Therefore
\[
\begin{aligned}
&
(\overline{\exp}^{X,R.e}_{x})^{*}\varphi_{R.e}(x, (r,\theta)) 
\\
&
:=
\varphi_{R.e}\big(\exp_{x, R_x.e_x}^X(r, \theta), \Gamma_{x, (r, \theta)}^{R.e}( \theta )\big)
\\
&
=
\varphi_{R.e}(\exp_{x, e_x}^X(r, \theta + \theta_x), \Gamma_{x, (r, \theta +\theta_x)}^{e}( \theta + \theta_x) - \theta_{\exp^X_{x, e_x}(r, \theta + \theta_x)})
\\
&
=
\varphi_e(\exp_{x, e_x}^X(r, \theta + \theta_x), \Gamma_{x, (r, \theta +\theta_x)}^{e}( \theta + \theta_x))
\\
&
=:
(\overline{\exp}^{X,e}_{x})^{*}\varphi_{e}(x, (r,\theta + \theta_x)). 
\end{aligned}
\]
So that:
\[
\begin{aligned}
&
\revised{(\varphi_{R.e} \star k) (x,\theta)}
\\
&
= 
\langle (\overline{\exp}^{X, R.e}_{x})^{*}\varphi_{R.e}, (\tau_{x,\theta}^{R.e})^*k \rangle.
\\
&
=
\langle (\overline{\exp}^{X, e}_{x})^{*}\varphi_{e}(\bullet, \bullet + \theta_x), (\tau_{x,\theta+\theta_x}^e)^*k \rangle
\\
&
=
\revised{(\varphi_{e} \star k) (x,\theta + \theta_x)}
\end{aligned}
\]

\subsection{Details on the implementation of convolution \label{app:impl_details}}
Our practical implementation of geodesic convolution relies on dense tensors for efficiency optimization. We describe it in the following subsection. 

Let $X$ be a triangle mesh with $N_v$ vertexes. The exponential map over $X$ is given by geodesic
polar coordinates (GPC) around each vertex. We model it as two $N_v$ by $N_v$ matrices $r$ and
$\theta$ where $r_{ij}$ and $\theta_{ij}$ represent the radius and angle at vertex $j$ of the GPC
centered at vertex $i$. The associated Euclidean coordinates $(x_{ij}, y_{ij})_{ij}$, extend inside
triangles by linear interpolation. We use the GPC to construct windows at each vertex along which we
can transfer signals on the mesh. The windows are defined by their radius $R$. The window attached to vertex $i$ consists of the points of polar coordinates are $(\frac{j.R}{N_{\rho}}, \frac{2k.\pi}{N_{\theta}})_{j \in [|1,N_{\rho}|], k \in [|1,N_{\theta}|]}$ in the GPC at $i$. Each window vertex lies inside a triangle we denote by $E_{ijkl}$ the index of the $l$-th vertex of the triangle containing $p_{ijk}$ the $jk$-th point of $i$-th window and by $W_{ijkl}$ the associated barycentric coordinate. That is:
\[
p_{ijk} = \sum_{l=1}^3 W_{ijkl} (x_{i,E_{ijkl}}, y_{i,E_{ijkl}}).
\]
A $a$-dimensional signal $f$ on the mesh consist of a $N_{v}$ by $a$ matrix, it can be pulled back to the window system by the following formula:
\[
E^*f_{ijkl} := \sum_{m=1}^3 W_{ijkm}f_{E_{ijkm},l}
\]
this can be seen as a discretization of the pull back by exponential map. Template functions are stacked in $ab$-polar kernel tensors of shape $(N_{\rho}, N_{\theta}, a, b)$ to be convolved with $a$-dimensional signals. In our context we define the \revised{geodesic convolution} of a $a$ dimensional signal $f$ by the a $ab$-polar kernel $K$ as the $b$-dimensional directional signal:
\[
(f \circledast K)_{ijk} := \sum_{r,m,l} E^*f_{irml}  K_{r,(m+j) \ \mathrm{mod} \ N_{\theta}, l,k}
\]
We adapt the original definition of geodesic convolutional layer:
\begin{Definition}[geodesic convolutional layer]
The geodesic convolutional layer of $ab$-kernel $K$, central kernel $C$ and bias vector $B$ and activation function $\xi$ transforms any $a$-dimensional signal $f$ into the $b$-dimensional signal:
\[
\revised{\mathrm{gc}_{K,C,B,\xi}(f)_{ij}} := \max_k \xi( f \circledast K_{ikj} + \sum_{l}C_{jl}f_{il} + B_j )
\]
\end{Definition}
We model the parallel transport as a $N_v$ by $N_v$ matrix $\gamma$ where $\gamma_{ij}$ is the angular offset in the angular coordinates $\theta$ of the GPC induced by the parallel transport along the geodesic joining vertex $i$ to vertex $j$. We discretize the transport of angles by the $4$D tensor $\Gamma$ of shape $(N_v, N_{\rho}, N_{\theta}, 3)$ defined by:
\[
\Gamma_{ijkl} := \frac{\gamma_{i,E_{ijkl}}}{2\pi} + \frac{k}{N_{\theta}}
\]
is the normalized angle representing the unit radial vector at $jk$-vertex of $i$-th window in its GPC. The tensor $\Gamma$ stores exact angular values however since discretized directional functions are defined using $N_{\theta}$ evenly spaced angles we consider the lower and upper transport tensors $\lfloor \Gamma \rfloor$ and $\lceil \Gamma \rceil$. We can interpolate directional signals inside the resulting angular sectors by using the fractional part $\{ \Gamma \} := \Gamma - \lfloor \Gamma \rfloor$. We define \revised{the discrete pull back} of a directional signal $\varphi$ and \revised{by the "completed" exponential map} as the tensor:
\[
\begin{aligned}
\varphi(E,\Gamma)_{ijkl} := 
\sum_{m} W_{ijkm} 
\big(
&
(1-\{\Gamma\}_{ijkm})\varphi_{E_{ijkm}, \lfloor \Gamma \rfloor_{ijkm},l}
\\
&
+
\{\Gamma\}_{ijkm}\varphi_{E_{ijkm}, \lceil \Gamma \rceil_{ijkm},l}\big)
\end{aligned}
\]
We define the discrete \revised{directional geodesic convolution} of a $a$-dimensional directional signal $\varphi$ by the $ab$-polar kernel $K$ as the $b$-dimensional directional signal:
\[
\varphi \star K_{ijk}
:=
\sum_{r,m,l} \varphi(E,\Gamma)_{irml} K_{r,(m+j) \ \mathrm{mod} \ N_{\theta}, l,k}
\]
\begin{Definition}[\revised{Directional geodesic convolutional layer}]
The \revised{directional geodesic convolution} of $ab$-polar kernel $K$, central kernel $C$ and bias vector $B \in \RR^b$ and activation function $\xi$ transforms any $a$-dimensional directional signal $\varphi$ to the $b$-dimensional directional signal:
\[
\revised{\mathrm{dir}_{K, C, B, \xi}(\varphi)_{ijk}}
:=
\xi(f \star K_{ijk} + \sum_{l}C_{kl}f_{ijl} + B_k)
\]
\end{Definition}

\subsection{\revised{A Stronger Notion of Directional Convolution \label{app:stronger_dc}}}
Several definition of \revised{directional convolution} are possible we chose to transport tangent directions along geodesic to produce a local radial vector field on the surface which is rotation invariant this simplifies implementation. There is an other natural choice \revised{to pull back directional signals to the tangent plane} we can set:
\[
\overline{\exp}^X_{x, v}(p) = (\exp_x^X(p), ~\Gamma_{x, p}(v)),
\]
and define \revised{convolution} of a directional function $\varphi$ over $X$ by a template $k$ by:
\[
\revised{(\varphi \star_X k) (x,v)} = \langle (\overline{\exp}^X_{x,v})^{*}\varphi, \tau_{x,v}^*k \rangle
\]
On the plane ($X = \RR^2$) the above formula can be simplified, we have: 
\[
\overline{\exp}^{\RR^2}_{x, v}(p) = (x+p, v)
\]
To ease notation we identify $\RR^2$ with $\CC$. The map $\tau$ simply rotates the template:
\[
\tau_{x, e^{i\theta}}^* k := k(e^{-i\theta} \bullet)
\]
therefore for any directional function $\varphi$ over $\RR^2$:
\[
\revised{(\varphi \star_{\RR^2} k) (x,e^{i\theta}) = (\varphi(\bullet, e^{i\theta}) * k(e^{-i\theta} \bullet ))(x)}
\]
Let $f$ be a function over $\RR^2$ denote by $(x,v) \mapsto \tilde{f}(x, v) := f(x)$ the associated directional function then:
\[
\revised{(\tilde{f} \star_{\RR^2} k) (x,e^{i\theta}) = (f * k)(e^{-i\theta}\bullet)(x)}
\]
The resulting directional function essentially stores the result of the usual convolution with the
rotated kernel in each direction.  Thanks to the above observation we can show that an image based
CNN using this notion of \revised{directional convolution} is equivalent to taking the maximal
response under rotations of the kernels with its standard CNN counterpart as shown by the following
proposition. For simplicity we consider one dimensional signals although the property still holds in
higher dimensions.
\begin{Proposition}
  Let $f$ a function over $\RR^2$, $(K_l)_l$ a sequence of template kernels, $(b_l)_l$ a sequence of
  real numbers $(\xi_l)_l$ a sequence of activation functions. We define the sequence
  $(\varphi_l)_l$ of directional functions over $\RR^2$ and the sequence $(f_l^{\theta})_l$ of
  function over $\RR^2$ by:
\[
\left\{
    \begin{array}{ll}
        \varphi_0 = \tilde{f}\\
        \varphi_{l+1} = \xi_{l} \circ \big(\varphi_l \star_{\RR^2} K_l + b_l\big)
    \end{array}
\right.
\left\{
    \begin{array}{ll}
        f^{\theta}_0 = f\\
        f^{\theta}_{l+1} = \xi_{l} \circ \big(\revised{(f^{\theta}_l * K_l(e^{-i\theta}\bullet)} + b_l\big)
    \end{array}
\right.
\]
Then for all $n$ we have:
\[
\varphi_n(x, e^{i\theta}) = f^{\theta}_n(x).
\]
\end{Proposition}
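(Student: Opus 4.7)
The plan is a straightforward induction on $n$, the layer index, relying on the identity for directional convolution on $\RR^{2}$ already derived immediately above the proposition, namely
\[
(\varphi \star_{\RR^2} k)(x, e^{i\theta}) = \bigl(\varphi(\bullet, e^{i\theta}) * k(e^{-i\theta}\bullet)\bigr)(x).
\]

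The base case $n=0$ is immediate: by definition $\varphi_{0}(x,e^{i\theta}) = \tilde{f}(x,e^{i\theta}) = f(x) = f_{0}^{\theta}(x)$, because the lifting $\tilde{f}$ ignores its directional argument.

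For the inductive step, assume $\varphi_{n}(x,e^{i\theta}) = f_{n}^{\theta}(x)$ for all $x \in \RR^{2}$ and all $\theta$. Fix $\theta$; then as a function of its first argument, $\varphi_{n}(\bullet, e^{i\theta})$ coincides with $f_{n}^{\theta}$ on $\RR^{2}$. Applying the displayed identity with $\varphi = \varphi_{n}$ and $k = K_{n}$ gives
\[
(\varphi_{n} \star_{\RR^{2}} K_{n})(x, e^{i\theta})
=
\bigl(f_{n}^{\theta} * K_{n}(e^{-i\theta}\bullet)\bigr)(x).
\]
Adding the bias $b_{n}$ and composing with the activation $\xi_{n}$ (both of which act pointwise and do not mix $x$ or $\theta$) yields exactly the update rule for $f_{n+1}^{\theta}$, so $\varphi_{n+1}(x,e^{i\theta}) = f_{n+1}^{\theta}(x)$, closing the induction.

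There is no real obstacle here: the statement is essentially a reformulation of the already-established fact that on the plane, directional convolution in direction $e^{i\theta}$ reduces to ordinary planar convolution with a rotated kernel. The only point to be careful about is that the induction hypothesis must hold \emph{simultaneously for all directions} $e^{i\theta}$ (not just for a fixed $\theta$), since the identity for $\star_{\RR^{2}}$ at direction $e^{i\theta}$ uses the value of $\varphi_{n}(\bullet, e^{i\theta})$ along the whole plane at that same angle; this is automatic from how the hypothesis is phrased. The activation and bias step is harmless because $\xi_{n}$ and $b_{n}$ act identically in both recurrences and commute with restriction to a fixed direction.
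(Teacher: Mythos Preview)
Your proof is correct and follows essentially the same approach as the paper's: an induction on the layer index, with the inductive step obtained by plugging the hypothesis into the previously established identity $(\varphi \star_{\RR^2} k)(x, e^{i\theta}) = \bigl(\varphi(\bullet, e^{i\theta}) * k(e^{-i\theta}\bullet)\bigr)(x)$ and then applying the bias and activation pointwise. Your added remark that the hypothesis must hold simultaneously for all $\theta$ is a nice clarification, but otherwise the arguments are the same.
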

\begin{proof}
We proceed by recurrence. The property is true for $n = 0$ by definition, suppose it is true for $n = k$. We have:
\[
\begin{aligned}
&
\varphi_{k+1}(x, e^{i\theta}) 
\\
&
= 
\xi_k(\varphi_{k} \star_{\RR_2} K_k(x, e^{i\theta}) + b_k)
\\
&
=
\xi_k( \revised{(\varphi_k(\bullet, e^{i\theta}) * K_k)(e^{-i\theta} \bullet )(x)} + b_k)
\\
&
=
\xi_k( \revised{(f^{\theta}_k * K_k(e^{-i\theta} \bullet ))(x)} + b_k)
\\
&
:=
f^{\theta}_{k+1}(x)
\end{aligned}
\]
which proves the property for $n = k+1$ and concludes the proof.
\end{proof}


\appendix

\end{document}